\journal{European J. Combin.}
\newtheorem{theorem}{Theorem}
\newtheorem{corollary}[theorem]{Corollary}
\newtheorem{definition}[theorem]{Definition}
\newtheorem{proposition}[theorem]{Proposition}
\newtheorem{lemma}[theorem]{Lemma}
\newtheorem{conjecture}[theorem]{Conjecture}
\newcommand{\join}{\bowtie} 
\newcommand{\Sym}{\ominus}
\begin{document}
\begin{frontmatter}
\title{Extremal graphs for the identifying code problem\tnoteref{t1}}
\tnotetext[t1]{This research is
    supported by the ANR Project IDEA {\scriptsize $\bullet$} {ANR-08-EMER-007},  2009-2011.}

\author[LaBRI]{Florent Foucaud}
\author[Grenoble]{Eleonora Guerrini}
\author [LaBRI]{Matja\v{z} Kov\v{s}e}
\author[LaBRI]{Reza Naserasr}
\author[Grenoble] {Aline Parreau}
\author[LaBRI]{Petru Valicov}
\address[LaBRI]{LaBRI - Universit\'e Bordeaux 1 - CNRS,
351 cours de la Lib\'eration,
33405 Talence cedex, France.}
\address[Grenoble]{Institut Fourier 100, rue des Maths, 
BP 74, 38402 St Martin d'H\`eres Cedex, France }

\begin{abstract}
An identifying code of a graph $G$ is a dominating set $C$ 
such that every vertex $x$ of $G$ is distinguished from other vertices
by the set of vertices in $C$ that are at distance at most 1 from $x$.
The problem of finding an identifying code of minimum possible size
turned out to be a challenging problem. It was proved by N.~Bertrand, I.~Charon, O.~Hudry and A.~Lobstein
that if a graph on $n$ vertices with at least
one edge admits an identifying code, then a minimal identifying code
has size at most $n-1$. They introduced classes of graphs whose
smallest identifying code is of size $n-1$. Few conjectures were
formulated to classify the class of all graphs whose minimum
identifying code is of size $n-1$.

In this paper, disproving these conjectures, we classify all finite
graphs for which all but one of the vertices are needed to form an
identifying code.  We classify all infinite graphs needing the whole
set of vertices in any identifying code. New upper bounds in terms of
the number of vertices and the maximum degree of a graph are also
provided.

\end{abstract}

 \begin{keyword} 
Identifying codes, Dominating sets, Infinite graphs.
 \end{keyword}

\end{frontmatter}

\section{Introduction}

Given a graph $G$, an identifying code of $G$ is a subset $C$ of vertices
of $G$ such that the subset of $C$ at distance at most 1 from a given
vertex $x$ is nonempty and uniquely determines $x$. Identifying codes
have been widely studied since the introduction of the concept
in~\cite{KCL98}, and have been applied to problems such as
fault-diagnosis in multiprocessor systems~\cite{KCL98}, compact
routing in networks~\cite{LTCS07}, emergency sensor networks in
facilities~\cite{DRSTU03} or the analysis of secondary RNA
structures~\cite{HKSZ06}.

The concept of identifying codes of graphs is related to several 
other concepts, such as locating-dominating sets~\cite{RS84,S88}
for graphs and the well-celebrated theorem of Bondy~\cite{B72}
on set systems.

The purpose of this paper is to classify extremal cases in some previously
known upper bounds for the  minimum size of identifying codes and thus
also improving those upper bounds. We begin by introducing our
terminology.

Unless specifically mentioned $G=(V,E)$ will be a finite simple graph with
$n=|V|$ being the number of vertices. The degree of a vertex $x$ is denoted
$deg(x)$. By $\Delta(G)$ we denote the maximum degree of $G$.

For two vertices $x$ and $y$ of $G$, we denote by $d_G(x,y)$
(or $d(x,y)$ if there is no ambiguity) the distance between $x$ and $y$ in $G$. 
The \emph{ball} of radius~$r$ centered at $x$, denoted $B_r(x)$, is the
set of vertices at distance at most~$r$ of $x$. We note that $x$ belongs
to $B_r(x)$ for every $r$.  A vertex $x$ of $G$ is \emph{universal} if
$B_1(x)=V(G)$. Given a subset $S$ of $V(G)$, we say that a vertex $x$
is $S$-universal if $S\subseteq B_1(x)$. 
 The symmetric difference of two sets $A$ and $B$ is denoted by $A\Sym B$.
 Given a pair of vertices of a graph $G$, we write $\Sym_r(x,y)=B_r(x)\Sym B_r(y)$.
Two vertices $x$ and $y$ are called \emph{twins} in $G$ if $B_1(x)=B_1(y)$. 
A graph is called \emph{twin-free} if it has no pair of twin vertices.
The complement of a graph $G$ is denoted by $\overline G$.
For $r\ge 2$, the \emph{$r^{th}$-power} of $G$, is
the graph $G^r=(V,E')$ with $E'=\{xy~\vert~x,y\in V, d_G(x,y)\le r\}$.  Conversely
if $H^r\cong G$, then we say  $H$ is an \emph{$r$-root} of $G$.
We denote by $G-x$ the graph obtained from $G$ by removing $x$ from $V(G)$
and all edges containing $x$ from $E(G)$. 
For two graphs $G_1=(V_1,E_1)$ and $G_2=(V_2,E_2)$,
$G_1\join G_2$ is the \emph{join graph} of $G_1$ and $G_2$. Its vertex
set is $V_1\cup V_2$ and its edge set is $E_1\cup
E_2\cup\{x_1x_2~\vert~x_1\in V_1, x_2\in V_2\}$. We denote by $K_n$,
the complete graph on $n$ vertices, by $P_n$, the path on $n$
vertices, and by $K_{a,b}$, the complete bipartite graph with
bipartitions of sizes $a$ and $b$.

Given a graph $G$ and an integer $k\ge 2$, a subset $I$ of vertices of
$G$ is called a \emph{$k$-independent set} if for all distinct
vertices $x,y$ of $I$, $d_G(x,y)\ge k$.  A $2$-independent set is
simply an \emph{independent set}.  Given an integer $r\ge 1$, a subset
$S$ of vertices of $G$ is called an \emph{$r$-dominating set} if for
every vertex $x$ of $G$, $B_r(x)\cap S\neq\emptyset$. We say that $S$
\emph{$r$-separates} two vertices $x$ and $y$, if $B_r(x)\cap S\neq
B_r(y)\cap S$.  A subset $S$ of vertices is an \emph{$r$-separating
  set} if it $r$-separates all distinct vertices $x,y$ of $G$. If $S$
is both $r$-dominating and $r$-separating, $S$ is an
\emph{$r$-identifying code}~\cite{KCL98}. If $S$ is $r$-dominating and
$r$-separates vertices of $V(G)\setminus S$, it is called an
\emph{$r$-locating-dominating set}~\cite{RS84}. Given a bipartite
graph $G$ with a partition $V=I \cup A$, a subset $S$ of $A$ is said to
be an \emph{$r$-discriminating code}~\cite{CCCHL08} if $S$
$r$-separates all pairs of distinct vertices of $I$.

In each of the previous concepts when $r=1$, we simply use the name of
the concept without specifying the value of $r$. 

Note that a set $C$ is an $r$-separating set  of $G$ 
(resp. $r$-identifying code) if and
only if it is a separating set (resp. identifying code) of $G^r$. A
graph $G$ admits a separating set (resp. identifying code) if and only
if it is twin-free, as a consequence it admits an $r$-separating set
(resp. $r$-identifying code) if and only if $G^r$ is
twin-free~\cite{CHHL07}.

For a graph $G$, the minimum cardinalities of an $r$-dominating set
and of an $r$-locating-dominating set are commonly denoted by
$\gamma_r(G)$ and $\gamma^{\text{\tiny {LD}} }_r(G)$. If $G^r$ is
twin-free, we denote by $\gamma^{\text{\tiny {ID}} }_r(G)$
(respectively $\gamma^{\text{\tiny {S}} }_r(G)$) the minimum
cardinality of an $r$-identifying code ($r$-separating set) of $G$.
It is clear from the definition that $\gamma^{\text{\tiny {S}} }_r(G)
\leq \gamma^{\text{\tiny {ID}} }_r(G) \leq \gamma^{\text{\tiny {S}}
}_r(G)+1$.

While the exact value of $\gamma^{\text{\tiny {ID}} }$ for some
classes of graphs has been determined~\cite{BCHL04, BCHL05},
finding the value of $\gamma^{\text{\tiny {ID}} }_r(G)$ for a general
graph $G$ is known to be NP-hard for any $r\ge
1$~\cite{CHLZ01,CHL03}.

Upper bounds, in terms of basic graph parameters, have been given for
the minimum sizes of the corresponding sets for most of the previously
defined concepts. In particular it has been
shown that $\gamma^{\text{\tiny {LD}} }_r(G) \leq |V(G)|-1$ and, assuming $G$ is twin-free 
and $G \not\cong \overline{K_n}$,   
$\gamma^{\text{\tiny {ID}} }_r(G) \leq |V(G)|-1$ (see~\cite{S88,GM07,CHL07}). 

For the case of locating-dominating sets, it was proved in~\cite{S88}
that for a connected graph $G$ we have $\gamma^{\text{\tiny {LD}}
}(G) = |V(G)|-1$ if and only if $G$ is either a star or a complete
graph.

In this paper, we do the analogous classification for identifying
codes. In the case of identifying codes, the class of graphs reaching
this bound is a much richer family. Thus we answer, in negative, the
two attempted conjectures for such classification~\cite{S07,
  CCCHL08}. This gives a partial answer to a question posed
in~\cite{CCCHL08}. This is done in Section~\ref{sec:1-id}.

All the previous definitions can easily be extended to infinite graphs. 
Examples of nontrivial infinite graphs for which the whole vertex set
is needed to form an identifying code are given in~\cite{CHL07}. We
classify all such infinite graphs in Section~\ref{sec:infinite}.  In
Section~\ref{BoundBynAndDelta} we introduce new upper bounds for
$\gamma^{\text{\tiny {ID}} }$ in terms of $n$ and $\Delta$. In all
these sections we address the problem of identifying codes only for
$r=1$ . In Section~\ref{r-identifyingCodes} we consider general
$r$-identifying codes.

The next section provides a set of preliminary results.

\section{Preliminary results}

In this section we have put together some basic results necessary 
for our main work. These results could be useful in the study of
identifying codes in general. We start by recalling the following 
theorem.

\begin{theorem}[\cite{B01, GM07}]\label{thm:existence}
  Let $G$ be a twin-free graph on $n$ vertices
  having at least one edge. Then $\gamma^{\text{\tiny {ID}} }(G)\le n-1$.
\end{theorem}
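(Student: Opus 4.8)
The plan is to build an identifying code of size at most $n-1$ by starting from the full vertex set $V(G)$ — which is always an identifying code when $G$ is twin-free — and showing that at least one vertex can always be deleted without destroying the identifying property. So the task reduces to finding a single vertex $v$ such that $V(G)\setminus\{v\}$ is still dominating and still separating.

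**First I would** observe that $V\setminus\{v\}$ fails to dominate only if $v$ is an isolated vertex in $B_1$-terms, i.e. $B_1(v)=\{v\}$; since $G$ has an edge and we may assume (after handling components separately, or noting an isolated vertex could itself be deleted once we know some other component supplies a removable vertex) that $v$ has a neighbor, domination is preserved for any non-isolated $v$. Next, $V\setminus\{v\}$ fails to separate a pair $x,y$ precisely when $\Sym_1(x,y)=\{v\}$, that is, $x$ and $y$ differ in exactly one vertex of their closed neighborhoods, and that vertex is $v$. Define an auxiliary relation: say $v$ is \emph{forced} if there exist $x,y$ with $B_1(x)\Sym B_1(y)=\{v\}$. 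Then $V\setminus\{v\}$ is an identifying code as soon as $v$ is non-isolated and not forced. So the whole proof comes down to: in a twin-free graph with an edge, some non-isolated vertex is not forced.

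**The key combinatorial step** is to analyze the structure of forced vertices. If $v$ is forced via the pair $\{x,y\}$, note $v\in B_1(x)\setminus B_1(y)$ (WLOG), and $B_1(x)=B_1(y)\cup\{v\}$. This is a very rigid condition: it says $N[x]$ and $N[y]$ are "nested with a single extra element." I would try to show that the set of forced vertices, together with these witnessing pairs, cannot cover everything — e.g. by a counting or extremal argument: pick $v$ to be a vertex of maximum degree, or consider a vertex in a carefully chosen position (such as an endpoint of a longest path, or a simplicial-like vertex), and argue directly that it is not forced, or that if every non-isolated vertex were forced one could derive a pair of twins, contradicting twin-freeness. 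A clean approach: suppose for contradiction every non-isolated vertex is forced; take $v$ of maximum degree, with witness pair $x,y$ and $B_1(x)=B_1(y)\cup\{v\}$; then push on $y$ — $y$ is also non-isolated hence forced, giving another such equation — and iterate to produce an infinite descent in degrees or a forced equality $B_1(a)=B_1(b)$.

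**The main obstacle I expect** is making the "some vertex is not forced" argument airtight without excessive case analysis — the nested-neighborhood pairs can chain together in several ways, and one must rule out all configurations in which every removable candidate is blocked. Handling disconnected graphs and isolated vertices cleanly (so that the edge hypothesis is used exactly where needed) is a minor but real bookkeeping point. I would bet the published proof sidesteps part of this by a slick choice of $v$ — for instance, among all vertices, choosing one that minimizes $|B_1(v)|$ subject to being non-isolated, or choosing the witness structure to be extremal — turning the infinite-descent idea into a one-line contradiction with twin-freeness.
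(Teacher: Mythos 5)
Your reduction is the right one, and it is the same first move the paper makes: $V(G)\setminus\{v\}$ is an identifying code exactly when $v$ is non-isolated and $G-v$ is twin-free (your ``forced'' vertices are precisely those $v$ for which $G-v$ has a pair of twins; note that a pair witnessing forcedness can never involve $v$ itself, since $B_1(x)\Sym B_1(v)=\{v\}$ is impossible). The gap is that the entire content of the theorem lives in the step you leave as a plan: proving that some non-isolated vertex is not forced. None of your candidate choices is shown to work, and the ``infinite descent in degrees'' does not obviously close: a witness pair gives $B_1(x)=B_1(y)\cup\{v\}$, hence $\deg(x)=\deg(y)+1$, but this constrains $x$ and $y$, not $v$, and when you ``push on $y$'' there is nothing yet forcing the new witness pair to relate to the old one. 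That control is exactly what the paper supplies with Lemma~\ref{lemma:twins_of_the_twins}: if $x,y$ are twins in $G-v$ and $G-x$ again has twins, then $v$ must be one of them. With this, the paper (in the proof of Lemma~\ref{lemma:takeout_vertex}, its independent proof of the Bertrand result) iterates the construction and shows the successive twin pairs assemble into induced copies of $A_k$ for every $k$ --- i.e.\ into $A^+_{\infty}$ --- which is impossible in a finite graph. So the descent is not on degrees but on the unbounded growth of an induced subgraph, and verifying that the newly produced vertices are genuinely new and wired correctly is where the work is.

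Two further remarks. First, your claim ``some vertex is not forced'' in a twin-free graph is literally Bondy's theorem on induced subsets (applied to the distinct sets $B_1(v)$), which the paper cites in its introduction; invoking it would make the separation half of your argument rigorous in one line, leaving only the bookkeeping that the Bondy vertex can be taken non-isolated (or that an isolated vertex, which cannot be removed from the code, does not obstruct the count). You do neither, so as written the proposal is a correct framing with the decisive combinatorial step missing. Second, a small slip: an isolated vertex can never be the deleted vertex, since $V(G)\setminus\{v\}$ then fails to dominate $v$; your parenthetical suggestion that it ``could itself be deleted'' should be dropped in favour of simply restricting the search to non-isolated candidates.
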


It is shown in~\cite{CHL07} that this bound is tight. In particular it
is shown that for any $t\geq 2$, $\gamma^{\text{\tiny {ID}} }(K_{1,t})=t$.
A stronger result is proved in Section~\ref{BoundBynAndDelta} (see
Lemma~\ref{lemma:takeout_vertex}).

The next lemma is an obvious but a crucial one.

\begin{lemma}\label{lemma:subsetid}
  Let $G $ be a twin-free graph and let $C$ be an
  identifying code of $G$. Then, any set $C'\subseteq V(G)$ such that
  $C\subseteq C'$ is an identifying code of $G$.
\end{lemma}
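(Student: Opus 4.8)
The plan is to prove the statement by working directly from the definition of an identifying code, which requires two properties: that the code is dominating, and that it separates all pairs of distinct vertices. First I would fix a twin-free graph $G$, an identifying code $C$ of $G$, and an arbitrary superset $C'$ with $C \subseteq C' \subseteq V(G)$, and then verify each of the two properties for $C'$ in turn.

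For the domination property, recall that for every vertex $x$ of $G$ we have $B_1(x) \cap C \neq \emptyset$ because $C$ is a (dominating) identifying code; since $C \subseteq C'$, this immediately gives $B_1(x) \cap C' \supseteq B_1(x) \cap C \neq \emptyset$, so $C'$ is dominating. For the separation property, take two distinct vertices $x$ and $y$. Because $C$ separates them, the sets $B_1(x) \cap C$ and $B_1(y) \cap C$ differ, so there is a vertex $c \in C$ lying in exactly one of $B_1(x)$ and $B_1(y)$ — say $c \in B_1(x)$ and $c \notin B_1(y)$. Since $c \in C \subseteq C'$, the same vertex $c$ witnesses that $B_1(x) \cap C' \neq B_1(y) \cap C'$. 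Hence $C'$ separates every pair of distinct vertices, and combined with domination this shows $C'$ is an identifying code.

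There is essentially no obstacle here; the only thing to be slightly careful about is the direction of the set-membership argument for separation — one must observe that a \emph{distinguishing} element of the smaller code remains present (and still distinguishing) in the larger code, rather than trying to argue about elements that might be newly added, which play no adversarial role. The twin-freeness hypothesis is needed only to guarantee that identifying codes exist in the first place (so the statement is not vacuous), and is not otherwise used.
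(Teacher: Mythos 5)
Your proof is correct and is exactly the routine verification the paper has in mind when it states this lemma without proof as ``obvious'': domination is inherited by supersets, and any element of $C$ witnessing a separation still lies in $C'$ and still witnesses it. Nothing further is needed.
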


The next proposition is useful in proving upper bounds on minimum identifying
codes by induction.

\begin{proposition}\label{GeneralizedGoodProposition}
  Let $G$ be a twin-free graph and $S\subseteq V(G)$ such that $G-S$ is
  twin-free. Then 
$\gamma^{\text{\tiny {ID}} }(G)\le \gamma^{\text{\tiny{ID}} }(G-S)+\lvert S\rvert$.
\end{proposition}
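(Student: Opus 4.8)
The plan is to fix a minimum identifying code $C'$ of $G-S$ and to show that it can be enlarged by at most $|S|$ further vertices into an identifying code of $G$; this avoids any induction on $|S|$. The organising observation is that the \emph{base trace} $t_0(u):=B_1^G(u)\cap C'$ already does most of the work. Since $C'\subseteq V(G-S)$ and $G-S$ is an induced subgraph of $G$, we have $B_1^G(u)\cap C'=B_1^{G-S}(u)\cap C'$ for every $u\in V(G-S)$; as $C'$ is an identifying code of $G-S$, these sets are pairwise distinct and nonempty. Consequently, if we partition $V(G)$ into the classes on which $t_0$ is constant, then every class on which $t_0\ne\emptyset$ meets $V(G-S)$ in at most one vertex, and the class $K_0:=\{u:t_0(u)=\emptyset\}$ is entirely contained in $S$.

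Next I would single out the ``bad'' classes: $K_0$, and the classes $K_1,\dots,K_m$ of size at least $2$ on which $t_0\ne\emptyset$ (each such $K_j$ contains at least $|K_j|-1$ vertices of $S$, by the previous paragraph). Any pair of distinct vertices not lying in a common class is already separated by $C'$, so only the pairs inside $K_0\cup\dots\cup K_m$ need attention. For $1\le j\le m$, the balls $\{B_1^G(t):t\in K_j\}$ are pairwise distinct because $G$ is twin-free, so Bondy's theorem on set systems provides a set $Y_j\subseteq V(G)$ with $|Y_j|\le|K_j|-1$ that separates them. For $K_0$ I would apply Bondy's theorem in the same way to obtain a separating set $Y_0'$ of size at most $|K_0|-1$, observe that at most one vertex of $K_0$ can have empty trace on $Y_0'$, and, if necessary, add one vertex of its ball, obtaining a set $Y_0\supseteq Y_0'$ with $|Y_0|\le|K_0|$ that separates the balls of $K_0$ and gives each vertex of $K_0$ a nonempty trace. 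Put $S^*:=Y_0\cup\dots\cup Y_m$. Since the classes $K_0,\dots,K_m$ are pairwise disjoint subsets of $V(G)$ with $K_0\subseteq S$ and $|K_j|-1\le|K_j\cap S|$ for $j\ge1$, we get $|S^*|\le|K_0|+\sum_{j\ge1}(|K_j|-1)\le\sum_{j\ge0}|K_j\cap S|\le|S|$.

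It then remains to check that $C:=C'\cup S^*$ is an identifying code of $G$, which is immediate from the setup. Domination: a vertex $u$ with $t_0(u)\ne\emptyset$ is dominated by $C'\subseteq C$, while a vertex of $K_0$ is dominated by $Y_0\subseteq C$. Separation of distinct $u,w$: if $t_0(u)\ne t_0(w)$ then some vertex of $C'\subseteq C$ lies in exactly one of $B_1^G(u),B_1^G(w)$; if $t_0(u)=t_0(w)$ then $u$ and $w$ lie in a common class $K_j$ with $0\le j\le m$, and by construction $Y_j\subseteq C$ separates $B_1^G(u)$ from $B_1^G(w)$. Hence $\gamma^{\text{\tiny {ID}} }(G)\le|C|\le|C'|+|S^*|\le\gamma^{\text{\tiny {ID}} }(G-S)+|S|$.

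The step that I expect to require the most care is the accounting in the second paragraph: one must be sure that the separators chosen independently for the various bad classes do not collectively overrun the budget $|S|$, and this is exactly what the bound $|K_j|-1\le|K_j\cap S|$ together with the disjointness of the classes yields. A second point not to overlook is that the class $K_0$ needs a \emph{dominating} (not merely separating) set, since its vertices receive no help from $C'$; this is why $Y_0$ is permitted one extra vertex beyond what Bondy's theorem alone supplies.
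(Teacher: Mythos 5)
Your argument is correct, and it is genuinely different from the one in the paper. The paper proceeds greedily: it orders the vertices of $S$, reinstates them one at a time, and observes that at each step the current code can fail on the new vertex $x_i$ in only one of two ways (a collision with exactly one already-identified vertex, or non-domination), each repairable by adding a single vertex; the key fact is that a code identifying all previously handled vertices leaves the newcomer confusable with at most one of them. You instead argue globally: you fix the trace map $t_0(u)=B_1^G(u)\cap C'$, note that it is injective and nonempty on $V(G-S)$ (so each nonempty-trace class carries at most one vertex outside $S$, and the empty-trace class lies inside $S$), and then resolve each ambiguous class $K_j$ by Bondy's theorem, charging the $\le|K_j|-1$ (or $\le|K_0|$, with the extra vertex for domination) new code vertices to $K_j\cap S$; disjointness of the classes keeps the total within $|S|$. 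Your bookkeeping is sound, including the subtle points you flag (the $+1$ for dominating the at most one trace-empty vertex of $K_0$, and the inequality $|K_j|-1\le|K_j\cap S|$). The paper's proof is shorter and avoids invoking Bondy's theorem, needing only the existence of a separator for any non-twin pair; yours is non-inductive, makes the budget of $|S|$ extra vertices structurally explicit, and ties the proposition to the discriminating-code/Bondy viewpoint that the paper itself raises in Remark~2. Both yield exactly the stated bound.
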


\begin{proof}
  Take a minimum code $C_0$ of $G-S$. Consider the vertices of $S$ in
  an arbitrary order $(x_1,\ldots,x_{|S|})$. Using induction we extend
  $C_0$ to a subset $C_i$ of $G$ which identifies the vertices in
  $V_i=V(G)\setminus \{x_{i+1},\ldots,x_{|S|}\}$. To do this, if $C_{i-1}$
  identifies all the vertices of $V_i$, we are done. Otherwise, since
  all the vertices in $V_{i-1}$ are identified, either $B_1(x_i)\cap
  C_{i-1} = B_1(y)\cap C_{i-1}$ for exactly one vertex $y$ in
  $V_{i-1}$, or $x_i$ is not dominated by $C_{i-1}$. In the first case
  $x_i$ and $y$ are separated in $G$ by some vertex, say $u$, so let
  $C_i=C_{i-1}\cup \{u\}$. In the second case,  let $C_i=C_{i-1}\cup
  \{x_i\}$. Now, in both cases, $C_i$ identifies all the vertices of
  $V_i$. At step $|S|$, $C_{|S|}$ is an identifying code of $G$ of
  size at most $\lvert C_0\rvert+\lvert S\rvert\le
  \gamma^{\text{\tiny{ID}} }(G-S)+\lvert S\rvert$.
\end{proof}

We will need the following special case of the previous proposition.

\begin{corollary}\label{GoodCorollary}
 Let $G$ be a connected graph with $\gamma^{\text{\tiny {ID}} }(G)= |V(G)|-1$, $G\ncong K_{1,2}$, then 
there is a vertex $x$ of $G$ such that $G-x$ is still connected and $\gamma^{\text{\tiny {ID}} }(G-x)= |V(G-x)|-1$.
\end{corollary}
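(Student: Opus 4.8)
The plan is to start from the hypothesis $\gamma^{\text{\tiny{ID}}}(G)=|V(G)|-1$ and use Proposition~\ref{GeneralizedGoodProposition} contrapositively: if there were a vertex $x$ with $G-x$ twin-free and $\gamma^{\text{\tiny{ID}}}(G-x)\le |V(G-x)|-2=|V(G)|-3$, then the proposition would give $\gamma^{\text{\tiny{ID}}}(G)\le |V(G)|-2$, a contradiction. So for \emph{every} vertex $x$ such that $G-x$ is twin-free we must have $\gamma^{\text{\tiny{ID}}}(G-x)\ge |V(G)|-2=|V(G-x)|-1$, and combined with the general upper bound (Theorem~\ref{thm:existence}, valid since $G-x$ is twin-free and, as we will check, has an edge) this forces $\gamma^{\text{\tiny{ID}}}(G-x)=|V(G-x)|-1$. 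Thus the whole content of the corollary is to produce \emph{one} vertex $x$ such that simultaneously (i) $G-x$ is connected and (ii) $G-x$ is twin-free (and has at least one edge so the bound applies).

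For existence of such an $x$, first take any vertex $v$ that is not a cut-vertex of $G$ (a connected graph on at least two vertices always has at least two non-cut-vertices, e.g. the two ends of a longest path / leaves of a spanning tree), so $G-v$ is connected. If $G-v$ happens to be twin-free we are done. Otherwise $G-v$ has a pair of twins; the point to exploit is that $G$ itself is twin-free, so any twin pair $\{a,b\}$ in $G-v$ is ``resolved'' only by $v$, meaning $v$ is adjacent to exactly one of $a,b$ (and $a,b$ agree on all of $V(G)\setminus\{v,a,b\}$ after accounting for adjacency among themselves). I would analyze this structure: the set of vertices involved in twin pairs of $G-v$ has a very rigid form, and from it one can pick a different non-cut-vertex $x$ — or argue directly — so that $G-x$ is both connected and twin-free. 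The exceptional case $G\cong K_{1,2}$ is exactly the small case where this combinatorics can fail (removing either leaf leaves a connected twin-free graph $K_2$ — wait, that works — so in fact the genuine obstruction is when $G-x$ always has twins or is disconnected; $K_{1,2}$ has $\gamma^{\text{\tiny{ID}}}=2=|V|-1$ but removing the center disconnects and removing a leaf gives $K_2$ with $\gamma^{\text{\tiny{ID}}}(K_2)=2=|V|-1$, so one must double-check whether $K_{1,2}$ really needs to be excluded or whether it is excluded merely for cleanliness — the cleanest route is to carry the case analysis and let $K_{1,2}$ drop out as the unique exception). I would also need to confirm $G-x$ has an edge: if $G-x$ were edgeless then $\gamma^{\text{\tiny{ID}}}(G-x)$ would not even be defined in the sense used, but $G-x$ connected on $\ge 2$ vertices has an edge, and on $1$ vertex the statement is vacuous or falls under the $K_{1,2}$ exclusion.

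The main obstacle I anticipate is the twin-freeness bookkeeping: ruling out that every choice of non-cut-vertex $x$ creates a twin pair in $G-x$. The key structural lemma to isolate is that if $G$ is twin-free but $G-v$ has twins $\{a,b\}$, then in $G$ the vertices $a$ and $b$ are ``almost twins'' — their closed neighborhoods differ only inside $\{v,a,b\}$. Using this, if for two distinct non-cut-vertices $v$ and $v'$ each of $G-v$ and $G-v'$ had a twin pair, the overlap conditions become so restrictive that $G$ is forced to be $K_{1,2}$ (or another tiny graph that is handled separately, e.g. one checks $K_2$, $P_3=K_{1,2}$, $P_4$, $K_3$ directly against the hypothesis $\gamma^{\text{\tiny{ID}}}(G)=|V(G)|-1$). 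Once such an $x$ is found, the displayed chain of inequalities closes the argument immediately.
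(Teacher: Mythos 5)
Your reduction is sound: once you have a vertex $x$ with $G-x$ connected, twin-free and containing an edge, Proposition~\ref{GeneralizedGoodProposition} gives $\gamma^{\text{\tiny{ID}}}(G-x)\ge|V(G-x)|-1$ and Theorem~\ref{thm:existence} gives the matching upper bound, exactly as in the paper. The genuine gap is the existence of such an $x$, which you only sketch, and the sketch rests on a false claim. You assert that if two distinct non-cut-vertices $v,v'$ each leave a twin pair behind, the constraints force $G$ to be $K_{1,2}$ or some other ``tiny'' graph handled by inspection. This fails badly: in $A_k$ (for $k\ge 3$), which satisfies the hypothesis $\gamma^{\text{\tiny{ID}}}=|V|-1$ and is $2$-connected, deleting $x_{i+k}$ creates the twin pair $\{x_i,x_{i+1}\}$ for every $i<k$, and symmetrically on the other side; so $2k-2$ of the $2k$ non-cut-vertices create twins, yet $A_k$ is neither tiny nor a star. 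The same happens throughout $\mathcal{A}$ and $\mathcal{A}\join K_1$, i.e.\ on essentially all the extremal graphs the corollary is about. So ``pick a non-cut-vertex and fix it up'' does not close without a real argument, and the structure of the bad set is not as rigid as you hope. The paper avoids this entirely by invoking Theorem~\ref{thm:existence} first: since $\gamma^{\text{\tiny{ID}}}(G)\le n-1$, some $V(G)\setminus\{x\}$ is an identifying code, and a set $V(G)\setminus\{x\}$ separates all pairs if and only if $G-x$ is twin-free; twin-freeness is thus obtained for free, and only connectivity needs patching afterwards (easy, because a disconnected twin-free $G-x$ with $\gamma^{\text{\tiny{ID}}}(G-x)=|V(G-x)|-1$ can have only isolated vertices outside one component, and Lemma~\ref{lemma:twins_of_the_twins} controls the one remaining bad case).

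A secondary error: you suggest that removing a leaf from $K_{1,2}$ gives $K_2$ with $\gamma^{\text{\tiny{ID}}}(K_2)=2$. But $K_2$ is not twin-free (its two vertices have the same closed neighbourhood), so it admits no identifying code at all; removing the centre disconnects the graph. That is precisely why $K_{1,2}$ must be excluded from the statement --- it is not excluded ``merely for cleanliness''.
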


\begin{proof}
If $G \cong K_{1,t}$, $t \neq 2$, then any leaf vertex works. Thus, we
may suppose $G\ncong K_{1,t}$. Then by Theorem~\ref{thm:existence},
there is a vertex $x$ of $G$ such that $V(G-x)$ is an identifying code
of $G$ and thus $G-x$ is twin-free and $G-x \ncong \overline{K_n}$. By
Proposition~\ref{GeneralizedGoodProposition}, we have
$\gamma^{\text{\tiny {ID}} }(G-x)\geq \gamma^{\text{\tiny {ID}}
}(G)-1 = |V(G-x)|-1$. Equality holds since otherwise $\gamma^{\text{\tiny {ID}} }(G)= |V(G)|$.
To complete the proof, we show that $x$ can be chosen such that $G-x$ is connected. 
To see this, assume $G-x$ is not connected. 
Since $\gamma^{\text{\tiny {ID}} }(G-x)= |V(G-x)|-1$, except one component, 
every component of $G-x$ is an isolated vertex.
If there are two or more such isolated vertices, then either one of them can be the vertex we want. 
Otherwise there is only one isolated vertex, call it $y$. 
Now if $G-y$ is twin-free, then $y$ is the desired vertex, else 
there is a vertex $x'$ such that $B_1(x')=B_1(x)-y$. 
Then $G-x'$ is connected and twin-free.
\end{proof}

\begin{lemma}\label{lemma:twins_of_the_twins}
 Let $G$ be a twin-free graph and let $v\in V(G)$. Let $x,y$ be a pair of twins in $G-v$. 
If $G-x$ or $G-y$ has a pair of twins, then $v$ must be one of the vertices of the pair.
\end{lemma}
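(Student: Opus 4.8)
The plan is to argue by contradiction. Suppose, say, that $G-x$ has a pair of twins $\{a,b\}$ with $v\notin\{a,b\}$ (the case of $G-y$ is symmetric, since the roles of $x$ and $y$ in the hypothesis are interchangeable), and derive a contradiction. First I would record the two elementary consequences of the hypotheses. Because $x,y$ are twins in $G-v$ and $G$ is twin-free, $B_1(x)\Sym B_1(y)=\{v\}$; in particular $B_1(x)\subseteq B_1(y)\cup\{v\}$ and $B_1(y)\subseteq B_1(x)\cup\{v\}$, and since $y\in B_1(y)$ and $y\neq v$ we get $y\in B_1(x)$, so $x$ and $y$ are adjacent. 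Similarly, because $a,b$ are twins in $G-x$ but (as $G$ is twin-free) not in $G$, we have $B_1(a)\Sym B_1(b)=\{x\}$, so $x$ lies in exactly one of $B_1(a),B_1(b)$; as neither $a$ nor $b$ equals $x$, exactly one of them is adjacent to $x$, and after relabelling we may assume $a\sim x$ and $b\not\sim x$.

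Next I would ``transport'' the adjacency $a\sim x$ across the near-twins $x,y$. Since $a\sim x$ gives $a\in B_1(x)\subseteq B_1(y)\cup\{v\}$ and $a\neq v$, we get $a\in B_1(y)$, and hence $y\in B_1(a)$ (whether $a=y$ or $a\sim y$). Now invoke that $a,b$ are twins in $G-x$: from $B_1(a)\setminus B_1(b)\subseteq\{x\}$ together with $y\in B_1(a)$ and $y\neq x$ we obtain $y\in B_1(b)$. Moreover $b\neq y$, since otherwise $b=y\not\sim x$ would contradict $x\sim y$; therefore $b\sim y$.

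Finally I would transport this adjacency back: $b\sim y$ yields $b\in B_1(y)\subseteq B_1(x)\cup\{v\}$, and as $b\neq v$ we conclude $b\in B_1(x)$, i.e.\ (since $b\neq x$) $b\sim x$, contradicting $b\not\sim x$. Hence no such pair exists, so $v$ belongs to every pair of twins of $G-x$, and by symmetry of $G-y$ as well.

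I do not expect a real obstacle: the statement reduces to a short symmetric-difference computation, and the only thing requiring care is the bookkeeping of which of the two ``exceptional'' vertices ($v$ for the pair $x,y$, and $x$ for the pair $a,b$) sits in which ball, plus ruling out the degenerate coincidence $b=y$. The symmetry between $x$ and $y$ is what lets the $G-x$ and $G-y$ cases be handled in one stroke.
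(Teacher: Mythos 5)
Your proof is correct and follows essentially the same argument as the paper's: assuming the twin pair of $G-x$ avoids $v$, you transport the adjacency to $x$ across the near-twins $x,y$ and back via the near-twins $a,b$, contradicting that $x$ separates $a$ from $b$. The only difference is that you are slightly more explicit about the degenerate coincidences (such as $b=y$) and about reducing the $G-y$ case by symmetry, which the paper handles with ``the other case is proved similarly.''
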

\begin{proof}
Since $v$ separates $x$ and $y$, it is adjacent to one of them (say
$x$) and not to the other. Suppose $z,t$ are twins in $G-x$. Suppose
$z$ is adjacent to $x$ and $t$ is not. If $z\neq v$ then $y$ is also
adjacent to $z$ and, therefore, $t$ is also adjacent to $y$ which
implies $x$ being adjacent to $t$. This contradicts the fact that $x$
separates $z$ and $t$.  The other case is proved similarly.
\end{proof}

\begin{proposition}\label{MinCodeOfJoins}
Let $G_1$ and $G_2$ be twin-free graphs such that for every minimum
separating set $S$ there is an $S$-universal vertex. If $G_1\join G_2$
is twin-free, then we have $\gamma^{\text{\tiny {S}} }(G_1 \join
G_2)=\gamma^{\text{\tiny {S}}} (G_1)+ \gamma^{\text{\tiny {S}}
}(G_2)+1$. Furthermore, if $S$ is a separating set of size
$\gamma^{\text{\tiny {S}}} (G_1)+ \gamma^{\text{\tiny {S}} }(G_2)+1$
of $G_1\join G_2$, then there is an $S$-universal vertex.
\end{proposition}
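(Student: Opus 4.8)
The plan is to work factor by factor. Write $G=G_1\join G_2$ and, for a set $C\subseteq V(G)$, put $C_1=C\cap V_1$ and $C_2=C\cap V_2$ (these are disjoint). The first step is to establish the structural fact on which everything rests: using $B_1^{G}(x)=B_1^{G_1}(x)\cup V_2$ for $x\in V_1$ and the symmetric identity for $y\in V_2$, a short computation of symmetric differences shows that $C$ is a separating set of $G$ if and only if $C_1$ separates $G_1$, $C_2$ separates $G_2$, and it is \emph{not} the case that simultaneously some vertex of $V_1$ is $C_1$-universal and some vertex of $V_2$ is $C_2$-universal. Indeed, for $x,x'\in V_1$ one gets $B_1^{G}(x)\Sym B_1^{G}(x')=B_1^{G_1}(x)\Sym B_1^{G_1}(x')\subseteq V_1$, so $C$ separates such a pair iff $C_1$ does (and symmetrically for pairs in $V_2$); and for $x\in V_1$, $y\in V_2$ one gets $B_1^{G}(x)\Sym B_1^{G}(y)=(V_1\setminus B_1^{G_1}(x))\cup(V_2\setminus B_1^{G_2}(y))$, which meets $C$ unless $x$ is $C_1$-universal and $y$ is $C_2$-universal.

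The lower bound is then immediate. If $C$ separates $G$, the structural fact gives that $C_1$ and $C_2$ separate $G_1$ and $G_2$, so $|C|=|C_1|+|C_2|\ge\gamma^{\text{\tiny {S}}}(G_1)+\gamma^{\text{\tiny {S}}}(G_2)$; and if equality held, both $C_1$ and $C_2$ would be \emph{minimum} separating sets, hence by hypothesis each would admit a universal vertex on its side, contradicting the structural fact. So $\gamma^{\text{\tiny {S}}}(G)\ge\gamma^{\text{\tiny {S}}}(G_1)+\gamma^{\text{\tiny {S}}}(G_2)+1$.

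For the upper bound the key small observation is that any separating set $C_0$ of a graph has \emph{at most one} $C_0$-universal vertex, since two such vertices $u_1,u_2$ would satisfy $B_1(u_1)\cap C_0=C_0=B_1(u_2)\cap C_0$ and hence would not be separated by $C_0$. Next, since $G$ is twin-free, $G_1$ and $G_2$ cannot both possess a universal vertex (two such vertices would be universal, hence twins, in $G$); so we may assume $G_2$ has no universal vertex. Take a minimum separating set $C_2'$ of $G_2$; by hypothesis it has a universal vertex $u$, necessarily unique by the observation, and since $u$ is not universal in $G_2$ we may pick $v\in V_2\setminus B_1^{G_2}(u)$, noting $v\notin C_2'$ because $C_2'\subseteq B_1^{G_2}(u)$. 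Then $C_2:=C_2'\cup\{v\}$ is still a separating set of $G_2$ (supersets of separating sets separate, since restricting to the smaller set recovers the distinction), has size $\gamma^{\text{\tiny {S}}}(G_2)+1$, and has no universal vertex. Pairing $C_2$ with any minimum separating set $C_1$ of $G_1$, the structural fact shows $C_1\cup C_2$ separates $G$, whence $\gamma^{\text{\tiny {S}}}(G)\le\gamma^{\text{\tiny {S}}}(G_1)+\gamma^{\text{\tiny {S}}}(G_2)+1$.

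Finally, for the ``furthermore'' statement, let $S$ be a separating set of $G$ with $|S|=\gamma^{\text{\tiny {S}}}(G_1)+\gamma^{\text{\tiny {S}}}(G_2)+1$. By the structural fact $S\cap V_1$ and $S\cap V_2$ separate $G_1$ and $G_2$, so each has at least the minimum size; since their sizes sum to one more than $\gamma^{\text{\tiny {S}}}(G_1)+\gamma^{\text{\tiny {S}}}(G_2)$, exactly one of them is a minimum separating set of its factor, say $S\cap V_1$. By hypothesis it has a universal vertex $x\in V_1$, and then $B_1^{G}(x)=B_1^{G_1}(x)\cup V_2\supseteq(S\cap V_1)\cup(S\cap V_2)=S$, so $x$ is $S$-universal. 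The only genuine content is the structural equivalence at the start together with the ``at most one universal vertex'' observation; once these are in hand the rest is bookkeeping, and the one place to be careful is the $\mathrm{WLOG}$ reduction deciding which factor carries no universal vertex.
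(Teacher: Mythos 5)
Your proof is correct and follows essentially the same route as the paper's: restrict the code to each factor for the lower bound, invoke the universal-vertex hypothesis to rule out equality, and find the $S$-universal vertex in the factor whose restriction is a minimum separating set. The only divergence is in the upper bound, and it is cosmetic: the paper adds one vertex separating the two code-universal vertices (such a vertex exists because $G_1\join G_2$ is twin-free), whereas you enlarge the code on the factor having no universal vertex so as to destroy its unique code-universal vertex --- both constructions give a separating set of size $\gamma^{\text{\tiny {S}}}(G_1)+\gamma^{\text{\tiny {S}}}(G_2)+1$.
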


\begin{proof}
Let $S$ be a minimum separating set of $G_1 \join G_2$. Since vertices of $G_2$ do
not separate any pair of vertices in $G_1$ then $S\cap V(G_1)$ is a separating set
of $G_1$. By the same argument $S\cap V(G_2)$ is a separating set of $G_2$.
Therefore, $|S|\geq \gamma^{\text{\tiny {S}}} (G_1)+ \gamma^{\text{\tiny {S}} }(G_2)$.
But if $|S|= \gamma^{\text{\tiny {S}}} (G_1)+ \gamma^{\text{\tiny {S}} }(G_2)$, then
there is an $[S\cap V(G_1)]$-universal vertex $x$ in $G_1$ and an $[S\cap V(G_2)]$-universal
vertex $y$ in $G_2$. But then, $x$ and $y$ are not separated by $S$.

Given a separating set $S_1$ of $G_1$ and a separating set $S_2$ of
$G_2$, the set $S_1 \cup S_2$ separates all pairs of vertices except
the $S_1$-universal vertex of $G_1$ from the $S_2$-universal vertex of
$G_2$. But since $G_1\join G_2$ is twin-free, we could add one more
vertex to $S_1\cup S_2$ to obtain a separating set of $G_1\join G_2$
of size $\gamma^{\text{\tiny {S}}} (G_1)+ \gamma^{\text{\tiny {S}}}
(G_2)+1$.

For the second part assume $S$ is a separating set of size
$\gamma^{\text{\tiny {S}}} (G_1)+ \gamma^{\text{\tiny {S}} }(G_2)+1$
of $G_1\join G_2$.  Then we have either $|S\cap
V(G_1)|=\gamma^{\text{\tiny {S}}} (G_1)$ or $|S\cap
V(G_2)|=\gamma^{\text{\tiny {S}}} (G_2)$. Without loss of generality
assume the former.  Then there is an $[S\cap V(G_1)]$-universal vertex
$z$ of $G_1$. Since $z$ is also adjacent to all the vertices of $G_2$,
it is an $S$-universal vertex of $G_1\join G_2$.
\end{proof}
In Proposition~\ref{MinCodeOfJoins} if $G_1\ncong K_1$ and $G_2\ncong
K_1$, then $\gamma^{\text{\tiny {ID}} }(G_1\join
G_2)=\gamma^{\text{\tiny {S}} }(G_1 \join G_2)=\gamma^{\text{\tiny
    {S}}} (G_1)+ \gamma^{\text{\tiny {S}} }(G_2)+1$.

The following lemma was discovered in a discussion between the first
author, R.~Klasing and A.~Kosowski. We include a proof for the sake of
completeness.

\begin{lemma}[\cite{Fprivate}]\label{4-independent}
 Let $G$ be a connected twin-free graph, and $I$ be a $4$-independent
 set such that for every vertex $x$ of $I$, the set $V(G)\setminus \{x\}$ is an
 identifying code of $G$. Then $C=V(G)\setminus I$ is an identifying code of
 $G$.
\end{lemma}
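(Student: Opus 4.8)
The plan is to check directly that $C=V(G)\setminus I$ is both dominating and separating. We may assume $G\ncong K_1$ (that case being trivial, as then $I=\emptyset$ and $C=V(G)$), so that, $G$ being connected, every vertex of $G$ has a neighbour. Domination of $C$ is then easy: if some vertex $v$ were not dominated we would have $B_1(v)\subseteq I$, so $v$ and any neighbour of $v$ would be two vertices of $I$ at distance~$1$, contradicting the $4$-independence of $I$.

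For separation, suppose towards a contradiction that two distinct vertices $u,v$ are not separated by $C$, that is, $B_1(u)\Sym B_1(v)\subseteq I$; note this set is non-empty because $G$ is twin-free. The crux is to show that $B_1(u)\Sym B_1(v)$ consists of a \emph{single} vertex $x$, which necessarily lies in $I$. Granting this, we finish immediately: by hypothesis $V(G)\setminus\{x\}$ is an identifying code and so separates $u$ and $v$, yet $B_1(u)$ and $B_1(v)$ have the same intersection with $V(G)\setminus\{x\}$ (their symmetric difference being exactly $\{x\}$), a contradiction.

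To establish the single-vertex claim I would first show $d(u,v)\le 1$. If instead $d(u,v)\ge 2$, then $u\in B_1(u)\setminus B_1(v)$ and $v\in B_1(v)\setminus B_1(u)$, hence $u,v\in I$ and so $d(u,v)\ge 4$; but then a neighbour $w$ of $u$ lies in $B_1(u)\setminus B_1(v)\subseteq I$ with $d(u,w)=1$, again contradicting $4$-independence. Knowing $d(u,v)\le 1$, take any two distinct $x,y\in B_1(u)\Sym B_1(v)$: if both lie in $B_1(u)$, or both in $B_1(v)$, the triangle inequality gives $d(x,y)\le 2$, while in the remaining ``mixed'' case $d(x,y)\le d(x,u)+d(u,v)+d(v,y)\le 3$. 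In all cases $d(x,y)\le 3<4$, so $x$ and $y$ cannot both belong to $I$; hence $B_1(u)\Sym B_1(v)=\{x\}$ with $x\in I$, as needed.

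The one point requiring care is the order of these reductions: the handy triangle-inequality bound $d(x,y)\le d(u,v)+2$ is only useful once $d(u,v)\le 1$ is known, so the case $d(u,v)\ge 2$ must be ruled out first by a separate argument. This is exactly where connectedness of $G$ (to supply the neighbour $w$) and the precise value~$4$ of the independence parameter are used — a $3$-independent set would not be enough.
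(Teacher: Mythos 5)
Your proof is correct and follows essentially the same route as the paper's: domination fails only if a vertex and a neighbour both lie in $I$, and two unseparated vertices must be at distance at most $1$, forcing their symmetric difference to be a single vertex $x$ of $I$, which contradicts $V(G)\setminus\{x\}$ being an identifying code. You merely spell out in more detail the distance computations that the paper compresses into ``because of the distance between $x$ and $y$'' and ``together they could have only one neighbour in $I$''.
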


\begin{proof}
  Clearly $C$ is a dominating set of $G$.  Let $x,y$ be a pair of
  vertices of $G$. If they both belong to $I$, $C\cap B_1(x) \neq
  C\cap B_1(y)$ because of the distance between $x$ and
  $y$. Otherwise, one of them, say $x$, is in $C$.  If they are not
  separated by $C$, then they must be adjacent. Thus, together they
  could have only one neighbour in $I$, call it $u$.  This is a
  contradiction because $V(G)\setminus \{u\}$ identifies $G$.
\end{proof}

We note that~$4$ is the best possible in the previous lemma. For
example, let $G=P_{4}$ and assume $x$ and $y$ are the two ends of $G$.
It is easy to check that $V(G)\setminus \{x\}$ and $V(G)\setminus
\{y\}$ are both identifying codes of $G$ but $V(G)\setminus \{x,y\}$
is not.

\section{Graphs with $\gamma^{\text{\tiny {ID}}}(G)=|V(G)|-1$}\label{sec:1-id}

In this section we classify all graphs $G$ for which 
$\gamma^{\text{\tiny {ID}} }(G)=|V(G)|-1$. As already mentioned, stars are examples of
such graphs. To classify the rest we show that special powers of paths are the basic
examples of such graphs.
Then we show that any other example is mainly obtained from the join of
some basic elements.

\begin{definition}
For an integer $k\geq 1$, let $A_k=(V_k, E_k)$ be the graph
with vertex set $V_k=\{x_1,\ldots,x_{2k}\}$ and edge set
$E_k=\{x_{i}x_{j} ~\big\vert ~\vert i-j\vert \leq k-1\}$.
\end{definition}

    \begin{figure}[ht]
        \begin{center}
\scalebox{0.9}{\begin{tikzpicture}[join=bevel,inner sep=0.5mm,]
  \node (x_kplus1) at (90bp,200bp) [draw, circle, fill=black] {};
  \node  at (87bp,210bp) {$x_{k+1}$};
  \node (x_kplus2) at (130bp,200bp) [draw, circle, fill=black] {};
  \node  at (127bp,210bp) {$x_{k+2}$};
  \node (x_kplus3) at (170bp,200bp) [draw, circle, fill=black] {};
  \node  at (167bp,210bp) {$x_{k+3}$}; 
  \node[scale=2] at (200bp,200bp) {$...$}; 
  \node (x_2kminus1) at (230bp,200bp) [draw, circle, fill=black] {};
  \node  at (227bp,210bp) {$x_{2k-1}$};
  \node (x_2k) at (270bp,200bp) [draw, circle, fill=black] {};
  \node  at (267bp,210bp) {$x_{2k}$};
  \node (x1) at (90bp,110bp) [draw, circle, fill=black] {};
  \node  at (87bp,100bp) {$x_1$}; 
  \node (x2) at (130bp,110bp) [draw, circle, fill=black] {};
  \node  at (127bp,100bp) {$x_2$};  
  \node (x3) at (170bp,110bp) [draw, circle, fill=black] {};
  \node  at (167bp,100bp) {$x_3$};  
  \node[scale=2] at (200bp,110bp) {$...$};  
  \node (x_kminus1) at (230bp,110bp) [draw, circle, fill=black] {};
  \node  at (227bp,100bp) {$x_{k-1}$};
  \node (x_k) at (270bp,110bp) [draw, circle, fill=black] {};
  \node  at (267bp,100bp) {$x_k$};  
  \draw [-] (x_kplus1) -- node[above,sloped] {} (x_k);
  \draw [-] (x_kplus1) -- node[above,sloped] {} (x2);
  \draw [-] (x_kplus1) -- node[above,sloped] {} (x3);
  \draw [-] (x_kplus1) -- node[above,sloped] {} (x_kminus1);
  \draw [-] (x_kplus2) -- node[above,sloped] {} (x_kminus1);
  \draw [-] (x_kplus2) -- node[above,sloped] {} (x_k);
  \draw [-] (x_kplus2) -- node[above,sloped] {} (x3);
  \draw [-] (x_kplus3) -- node[above,sloped] {} (x_kminus1);
  \draw [-] (x_kplus3) -- node[above,sloped] {} (x_k);
  \draw [-] (x_2kminus1) -- node[above,sloped] {} (x_k);
  \node at (360bp,200bp) {Clique on $\{x_{k+1},...,x_{2k}\}$};
\draw[line width=1pt, draw = black] (175bp,203bp) ellipse(120bp and 25bp) node{};
  \node at (355bp,110bp) {Clique on $\{x_1,...,x_k\}$};
  \draw[line width=1pt, draw = black] (175bp,108bp) ellipse(120bp and 25bp) node{};  
  \end{tikzpicture}}
  \end{center}
  \caption{The graph $A_{k}$ which needs $|V(A_k)|-1$ vertices for any identifying code}
\label{fig:A_k}
\end{figure}
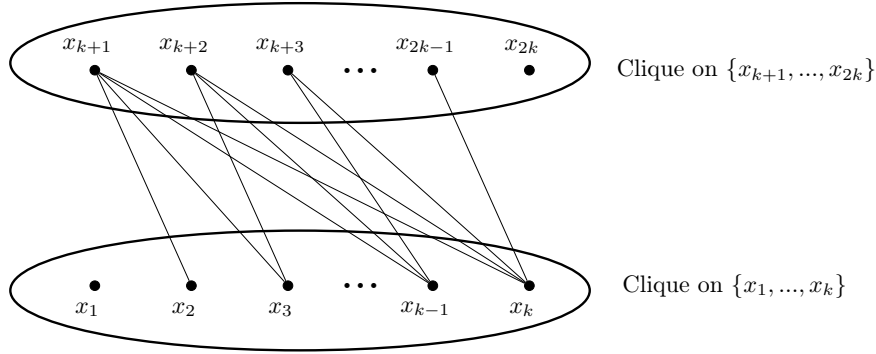

An illustration of graph $A_k$ is given in Figure~\ref{fig:A_k}. We note that for $k\geq 2$ we have $A_{k}=P_{2k}^{k-1}$ and
$A_{1}=\overline{K_{2}}$. It is also easy to check that the only nontrivial 
automorphism of $A_k$ is the mapping $x_{i}\rightarrow x_{2k+1-i}$. It is not hard to
observe that $A_{k}$ is twin-free, $\Delta(A_{k})=2k-2$ and that $A_{k}$ and 
$\overline{A_{k}}$ are connected if $k\geq 2$.

\begin{proposition}\label{prop:A2k}
For $k\geq 1$, we have: $\gamma^{\text{\tiny {S}} }(A_k)=2k-1$
with $B_1(x_k)$ and $B_1(x_{k+1})$ being the only separating sets of size $2k-1$
of $A_k$. Furthermore, if $k \geq 2$, $\gamma^{\text{\tiny {ID}} }(A_k)=2k-1$.
\end{proposition}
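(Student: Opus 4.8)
The plan is to show first that $B_1(x_k)$ is a separating set of $A_k$ of size $2k-1$, then argue that no smaller separating set exists, and finally pin down that the only two separating sets achieving the bound are $B_1(x_k)$ and $B_1(x_{k+1})$; the identifying-code claim then follows immediately by combining the lower bound $\gamma^{\text{\tiny{S}}}(A_k)\le\gamma^{\text{\tiny{ID}}}(A_k)$ with the observation that neither $B_1(x_k)$ nor $B_1(x_{k+1})$ is dominating (so an identifying code cannot have size $2k-1$), while $V(A_k)\setminus\{x_1\}$ is an identifying code by Theorem~\ref{thm:existence}, giving $\gamma^{\text{\tiny{ID}}}(A_k)=2k-1$ for $k\ge 2$. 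I would start by computing the balls: in $A_k$ one has $B_1(x_i)=\{x_j : |i-j|\le k-1\}$, so $B_1(x_k)=\{x_1,\dots,x_{2k-1}\}$ (everything except $x_{2k}$) and $B_1(x_{k+1})=\{x_2,\dots,x_{2k}\}$ (everything except $x_1$). Each has size $2k-1$.

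For the upper bound, I would verify directly that $S=B_1(x_k)=V(A_k)\setminus\{x_{2k}\}$ separates every pair. Indeed, for two vertices $x_i,x_j$ with $i<j$, if $x_j\ne x_{2k}$ then $x_j\in S$ and $x_j\in B_1(x_j)$; since $x_j\notin B_1(x_i)$ would require $|i-j|\ge k$, the only way $x_j$ fails to separate is if $j-i\le k-1$, in which case I look at $x_i$ itself or a vertex near $x_i$ — more carefully, I would note that the "left endpoint'' $x_1$ lies in $B_1(x_i)$ iff $i\le k$, which separates the two halves, and within each half one uses that the neighborhoods $B_1(x_i)\cap S$ are strictly nested/shifted. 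I expect the cleanest route is: for $i<j$, the vertex $x_{\max(1,\,i-(k-1))}\in S$ lies in $B_1(x_i)\setminus B_1(x_j)$ unless that vertex also sees $x_j$, and a short case check (using $j\le 2k$, $i\ge 1$) handles the remaining pairs by instead using a vertex just to the right of $x_j$. This is routine but needs to be written carefully.

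For the lower bound $\gamma^{\text{\tiny{S}}}(A_k)\ge 2k-1$, equivalently that no set $S$ of size $2k-2$ separates all pairs, I would use a counting/extremal argument: a vertex $x_m$ omitted from a separating set $S$ must be distinguished from every other vertex by $S$, and the "signature'' $B_1(x_i)\cap S$ determines $x_i$. The key structural fact is that the sequence of balls $B_1(x_1)\subsetneq\cdots$ is not linearly ordered, but the symmetric differences $\Sym_1(x_i,x_j)$ are controlled: $\Sym_1(x_i,x_j)$ consists of a block of $\min(|i-j|,\dots)$ vertices at each end of the relevant range. Concretely, $\Sym_1(x_i,x_{i+1})=\{x_{i-(k-1)}\}\cup\{x_{i+k}\}$ intersected with $V_k$, which for the "interior'' consecutive pairs has size $2$ but for pairs near the ends has size $1$. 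So the pairs $(x_1,x_2)$ and $(x_{2k-1},x_{2k})$ each have symmetric difference of size $1$ — namely $\{x_{k+1}\}$ and $\{x_k\}$ respectively — which forces $x_{k+1}\in S$ and $x_k\in S$. Iterating this argument (pairs $(x_2,x_3)$ forcing another vertex, etc.) should force all but one vertex into $S$, giving $|S|\ge 2k-1$; and tracking which single vertex can be left out yields exactly the two options $B_1(x_k)$ and $B_1(x_{k+1})$.

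\emph{Main obstacle.} The hard part is the lower bound together with the uniqueness, i.e.\ showing $|S|\ge 2k-1$ and that equality forces $S\in\{B_1(x_k),B_1(x_{k+1})\}$; the upper bound and the passage to $\gamma^{\text{\tiny{ID}}}$ are routine. I expect the cleanest way to organize the lower bound is to identify a chain of pairs whose symmetric differences are singletons or small sets that successively force membership in $S$: start from the forced vertices $x_k,x_{k+1}$ (from the endpoint pairs), then observe that once $x_k,x_{k+1}\in S$, separating $(x_1,x_2)$ is already handled, but separating, say, $(x_2,x_{2k})$ or other long pairs, together with the requirement that at most one vertex is outside $S$, pins things down. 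If this direct approach gets messy, an alternative is to use the automorphism $x_i\mapsto x_{2k+1-i}$ to halve the casework, and to invoke the general bound $\gamma^{\text{\tiny{S}}}\le\gamma^{\text{\tiny{ID}}}\le\gamma^{\text{\tiny{S}}}+1$ so that one only needs to rule out $\gamma^{\text{\tiny{ID}}}=2k-2$, which amounts to showing no $(2k-2)$-set is simultaneously dominating and separating — possibly an easier target than the full separating lower bound, though for the stated proposition we do need the separating value exactly.
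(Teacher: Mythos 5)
Your lower-bound strategy is essentially the paper's: compute $\Sym_1(x_i,x_{i+1})$ for consecutive pairs, use the resulting singletons to force vertices into $S$, and use the middle pair to force one of $x_1,x_{2k}$. But you have the structure of these symmetric differences backwards. From $\Sym_1(x_i,x_{i+1})=\left(\{x_{i-k+1}\}\cup\{x_{i+k}\}\right)\cap V_k$ one sees that the symmetric difference is a singleton for \emph{every} $i\neq k$ (equal to $\{x_{i+k}\}$ when $i<k$ and to $\{x_{i-k+1}\}$ when $i>k$), and has size two only for the single middle pair $(x_k,x_{k+1})$, where it equals $\{x_1,x_{2k}\}$; it is not the case that ``interior'' pairs give size $2$ and only end pairs give singletons. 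The consecutive pairs alone therefore already force $\{x_2,\dots,x_{2k-1}\}\subseteq S$ in one step, and the middle pair then forces $x_1$ or $x_{2k}$, yielding $|S|\ge 2k-1$ and the uniqueness simultaneously -- no further ``iteration'' or case analysis about long pairs is needed. Your sketch is recoverable, but the structural claim as written is wrong, and you present as an open-ended obstacle something that this computation settles immediately.

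The identifying-code step contains a genuine error. You assert that neither $B_1(x_k)$ nor $B_1(x_{k+1})$ is dominating, deduce that no identifying code has size $2k-1$, and then conclude $\gamma^{\text{\tiny {ID}}}(A_k)=2k-1$ -- which is self-contradictory. In fact the opposite holds: for $k\ge 2$ the unique vertex missing from $B_1(x_k)$ (namely $x_{2k}$) has all its neighbours $x_{k+1},\dots,x_{2k-1}$ inside the set, and symmetrically for $B_1(x_{k+1})$, so both sets \emph{are} dominating, hence identifying codes; this is exactly how the equality $\gamma^{\text{\tiny {ID}}}(A_k)=2k-1$ is obtained (domination fails only for $k=1$, which is why that case is excluded). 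Moreover, Theorem~\ref{thm:existence} only guarantees that \emph{some} code of size $n-1$ exists; it cannot be invoked to certify that the particular set $V(A_k)\setminus\{x_1\}$ is one. Finally, the verification that $B_1(x_k)$ separates all (not only consecutive) pairs is still owed; it is short, but your proposal defers it without supplying it.
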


\begin{proof}
Let $S$ be a separating set of $A_{k}$. For $i<k$, we have
$\Sym(x_i,x_{i+1})=\{x_{i+k}\}$ and for $k< i \leq 2k-1$, we have
$\Sym(x_i,x_{i+1})=\{x_{i-k+1}\}$. Thus, $\{x_{2},\ldots,x_{2k-1}\}
\subset S$.  But to separate $x_k$ and $x_{k+1}$, we must add $x_1$ or
$x_{2k}$.  It is now easy to see that $V_k \setminus
\{x_1\}=B_1(x_{k+1})$ and $V_k\setminus \{x_{2k}\}=B_1(x_k)$, each is
a separating set of size $2k-1$.  If $k\geq 2$, then they both
dominate $A_{k}$ and therefore are also identifying codes.
\end{proof}

In the previous proof in fact we have also proved that:

\begin{corollary}\label{coro:A2k}
 For $k\geq 1$ every minimum separating set $S$ of $A_k$ has a
 $S$-universal vertex.
\end{corollary}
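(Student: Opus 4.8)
The plan is to simply read off the conclusion from the analysis already carried out in the proof of Proposition~\ref{prop:A2k}. There it was established that $\gamma^{\text{\tiny {S}}}(A_k)=2k-1$ and, more precisely, that the \emph{only} separating sets of $A_k$ of size $2k-1$ are $B_1(x_k)=V_k\setminus\{x_{2k}\}$ and $B_1(x_{k+1})=V_k\setminus\{x_1\}$. So the first step is just to invoke this classification of minimum separating sets.

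The second step is to check the universality condition for each of the two candidates. If $S=B_1(x_k)$, then by definition of $S$-universality we need $S\subseteq B_1(x_k)$, which holds with equality; hence $x_k$ itself is $S$-universal. Symmetrically, if $S=B_1(x_{k+1})$, then $S\subseteq B_1(x_{k+1})$, so $x_{k+1}$ is $S$-universal. Since these are the only two possibilities for a minimum separating set, every minimum separating set of $A_k$ admits an $S$-universal vertex.

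I do not expect any genuine obstacle here: the statement is a direct corollary of the stronger fact proved in Proposition~\ref{prop:A2k}, namely the explicit identification of the minimum separating sets as closed neighbourhoods. The only point worth a sentence is that for $k=1$ one should note $A_1=\overline{K_2}$ and double-check that $B_1(x_1)$ and $B_1(x_2)$ (each a single vertex) are indeed the minimum separating sets there as well, so that the argument is uniform in $k\geq 1$; this is already covered by the $k=1$ case inside the proof of Proposition~\ref{prop:A2k}.
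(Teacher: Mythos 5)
Your proposal is correct and matches the paper exactly: the paper derives this corollary with the single remark that it was already established in the proof of Proposition~\ref{prop:A2k}, since the only minimum separating sets are $B_1(x_k)$ and $B_1(x_{k+1})$, each of which trivially admits its center as an $S$-universal vertex. Your explicit check of the universality condition (and the $k=1$ case) is exactly the intended, if unwritten, argument.
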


Let $\mathcal A$ be the closure of $\{A_i ~|~ i=1,2,\ldots \}$ with
 respect to operation $\join$. It is shown below that elements of
$\mathcal A$  are also extremal graphs with respect to both
separating sets and identifying codes. 

\begin{proposition}\label{IDCodeofJoins}
 For every graph $G \in \mathcal
 A$, we have $\gamma^{\text{\tiny{S}} }(G)=|V(G)|-1$. Furthermore,
 every minimum separating set $S$ of $G$ has an $S$-universal vertex.
\end{proposition}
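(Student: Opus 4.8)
The plan is to prove Proposition~\ref{IDCodeofJoins} by induction on the number of ``factors'' $A_i$ used to build $G$ under the join operation, using Proposition~\ref{MinCodeOfJoins} as the inductive engine and Corollary~\ref{coro:A2k} together with Proposition~\ref{prop:A2k} as the base case.

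First I would set up the induction. If $G$ consists of a single factor, then $G=A_k$ for some $k\ge 1$, and Proposition~\ref{prop:A2k} gives $\gamma^{\text{\tiny{S}}}(A_k)=2k-1=|V(A_k)|-1$, while Corollary~\ref{coro:A2k} says every minimum separating set of $A_k$ has an $S$-universal vertex; so both conclusions hold. For the inductive step, write $G=G_1\join G_2$ where $G_1$ and $G_2$ are nontrivial elements of $\mathcal A$ (built from fewer factors). One small point to dispatch: an arbitrary element of $\mathcal A$ is a join of several $A_i$'s, and joins are associative, so I can always split off one factor and regard $G$ as $H\join A_k$ with $H\in\mathcal A$; this makes the induction clean. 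By the induction hypothesis, $\gamma^{\text{\tiny{S}}}(H)=|V(H)|-1$ and $\gamma^{\text{\tiny{S}}}(A_k)=|V(A_k)|-1$, and — crucially — every minimum separating set of each of $H$ and $A_k$ has an $S$-universal vertex. That is exactly the hypothesis needed to invoke Proposition~\ref{MinCodeOfJoins}.

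Next I would check the twin-freeness side condition of Proposition~\ref{MinCodeOfJoins}: that $H\join A_k$ is twin-free. Here I would use that in a join $G_1\join G_2$, two vertices are twins iff they are twins within the same factor (a vertex of $G_1$ and a vertex of $G_2$ can only be twins if both are universal in their own factor, which cannot happen when each factor is twin-free with more than one vertex — a universal vertex in a twin-free graph on $\ge 2$ vertices would force... actually one must argue: if $x\in V_1$ and $y\in V_2$ are twins in $G_1\join G_2$ then $x$ is adjacent to every vertex of $V_1\setminus\{x\}$ forced through $y$'s neighbourhood, so $x$ is universal in $G_1$, and symmetrically $y$ is universal in $G_2$, but two universal vertices in a twin-free graph... this needs the factors to be twin-free and to have at least two vertices). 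Since each $A_i$ is twin-free and has $2i\ge 2$ vertices, and twin-freeness is preserved under iterated joins by this argument, $G$ is twin-free. Applying Proposition~\ref{MinCodeOfJoins} then yields
\[
\gamma^{\text{\tiny{S}}}(H\join A_k)=\gamma^{\text{\tiny{S}}}(H)+\gamma^{\text{\tiny{S}}}(A_k)+1=(|V(H)|-1)+(|V(A_k)|-1)+1=|V(G)|-1,
\]
and the ``furthermore'' clause of Proposition~\ref{MinCodeOfJoins} gives that every minimum separating set of $G$ (i.e.\ every separating set of size $|V(G)|-1$) has an $S$-universal vertex, completing the induction.

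The main obstacle I anticipate is not the inductive bookkeeping but verifying the twin-freeness condition throughout the induction, since Proposition~\ref{MinCodeOfJoins} requires $G_1\join G_2$ to be twin-free and I must ensure this persists as factors are added; the clean way is to prove once and for all that a join of two twin-free graphs each on at least two vertices is twin-free, and note every $A_i$ qualifies. A secondary subtlety is the associativity/reassociation of the join so that ``built from $\mathcal A$ by iterated $\join$'' can be presented as a linear induction splitting off one $A_k$ at a time; this is routine but should be stated. Everything else follows mechanically from the two cited results.
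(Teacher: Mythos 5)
Your overall strategy is exactly the paper's: induct on the number of join factors, use Proposition~\ref{prop:A2k} and Corollary~\ref{coro:A2k} for the base case $G=A_k$, and feed the inductive hypothesis into Proposition~\ref{MinCodeOfJoins} for $G=H\join A_k$; the arithmetic $(|V(H)|-1)+(|V(A_k)|-1)+1=|V(G)|-1$ and the transfer of the ``furthermore'' clause are correct, and the reassociation of the join is indeed routine. The paper's own proof is two sentences and does not even mention the twin-freeness hypothesis of Proposition~\ref{MinCodeOfJoins}, so you are right to single it out as the one thing that needs checking.

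However, the lemma you propose to discharge it --- that a join of two twin-free graphs each on at least two vertices is twin-free --- is false. A twin-free graph on at least two vertices can perfectly well contain a universal vertex ($P_3$ is twin-free and its middle vertex is universal; so is any graph in $\mathcal A\join K_1$), and if $x$ is universal in $G_1$ and $y$ is universal in $G_2$ then $B_1(x)=V_1\cup V_2=B_1(y)$ in $G_1\join G_2$, so $x$ and $y$ are twins there: $P_3\join P_3$ is a join of two twin-free graphs on three vertices each whose join is not twin-free. Your trailing remark ``two universal vertices in a twin-free graph\dots'' is hunting for a contradiction that does not exist, because the two universal vertices live in different factors and are only twins of each other \emph{after} the join is formed. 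The correct repair is specific to $\mathcal A$: no $A_k$ has a universal vertex (its maximum degree is $2k-2$ on $2k$ vertices), and a vertex of one factor is universal in a join if and only if it is universal in its own factor, so by induction no element of $\mathcal A$ has a universal vertex. Combined with your (correct) observation that cross-factor twins force a universal vertex in each factor, and that same-factor twins in the join are twins in that factor, this yields twin-freeness of every element of $\mathcal A$. With that one-line substitution your proof goes through and coincides with the paper's.
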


\begin{proof}
The proposition is true for basic elements of $\mathcal A$ by
Proposition~\ref{prop:A2k} and by Corollary~\ref{coro:A2k}. For a
general element $G=G_1\join G_2$ it is true by
Proposition~\ref{MinCodeOfJoins} and by induction.
\end{proof}

\begin{corollary}
 If $G\in \mathcal A$ and $G\ncong A_1$, then $\gamma^{\text{\tiny {ID}} }(G)=|V(G)|-1$.
\end{corollary}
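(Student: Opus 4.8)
The plan is to derive this corollary almost immediately from Proposition~\ref{IDCodeofJoins} together with the relationship between separating sets and identifying codes that the paper has already set up. Recall that for any twin-free graph $G$ one has $\gamma^{\text{\tiny {S}} }(G) \leq \gamma^{\text{\tiny {ID}} }(G) \leq \gamma^{\text{\tiny {S}} }(G)+1$, and that Theorem~\ref{thm:existence} (or the stated bound $\gamma^{\text{\tiny {ID}} }_r(G) \leq |V(G)|-1$ for twin-free $G \not\cong \overline{K_n}$) gives $\gamma^{\text{\tiny {ID}} }(G) \leq |V(G)|-1$. Since Proposition~\ref{IDCodeofJoins} tells us $\gamma^{\text{\tiny {S}} }(G) = |V(G)|-1$ for every $G \in \mathcal{A}$, the only two possibilities are $\gamma^{\text{\tiny {ID}} }(G) = |V(G)|-1$ or $\gamma^{\text{\tiny {ID}} }(G) = |V(G)|$; so the entire content of the corollary is to rule out the latter, i.e., to show that when $G \ncong A_1$ the graph $G$ is twin-free and not an independent set, hence the $n-1$ upper bound applies.

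The first step is therefore to check twin-freeness of $G \in \mathcal{A}$. Each basic element $A_i$ is twin-free (stated in the excerpt, and for $A_1 = \overline{K_2}$ as well, vacuously since its two vertices have different closed neighbourhoods: $B_1(x_1) = \{x_1\}$, $B_1(x_2) = \{x_2\}$). For a join $G = G_1 \join G_2$, two vertices in the same part are twins in $G$ iff they were twins in that part (adding a universal-to-both set of neighbours preserves equality and inequality of closed neighbourhoods), and a vertex of $G_1$ and a vertex of $G_2$ can never be twins in $G_1 \join G_2$ unless both $G_1$ and $G_2$ are single vertices — indeed $x_1 \in V(G_1)$ has all of $V(G_1)$'s relevant structure in its neighbourhood plus all of $V(G_2)$, while $x_2 \in V(G_2)$ has all of $V(G_2)$ plus all of $V(G_1)$, and these closed neighbourhoods coincide only if $|V(G_1)| = |V(G_2)| = 1$. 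So by induction every $G \in \mathcal{A}$ is twin-free. (Alternatively one may simply observe that twin-freeness is precisely what is needed for $\gamma^{\text{\tiny {S}} }$ to be defined, and Proposition~\ref{IDCodeofJoins} already presumes it.)

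The second step is to rule out $G \cong \overline{K_n}$. The only element of $\mathcal{A}$ with no edges is $A_1 = \overline{K_2}$ itself: every $A_i$ with $i \geq 2$ contains edges, and any nontrivial join $G_1 \join G_2$ contains all edges between $V(G_1)$ and $V(G_2)$, hence is never edgeless. So for $G \in \mathcal{A}$ with $G \ncong A_1$ we have $G$ twin-free, $G \not\cong \overline{K_n}$, and $|E(G)| \geq 1$, which by Theorem~\ref{thm:existence} gives $\gamma^{\text{\tiny {ID}} }(G) \leq |V(G)|-1$. Combined with $\gamma^{\text{\tiny {ID}} }(G) \geq \gamma^{\text{\tiny {S}} }(G) = |V(G)|-1$ from Proposition~\ref{IDCodeofJoins}, equality follows.

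I do not anticipate a genuine obstacle here — the proof is a short bookkeeping argument — but the one point that deserves a line of care is the base case $A_1$ and the degenerate joins: one must be explicit that $A_1 \join A_1 = K_2$ does satisfy the conclusion ($\gamma^{\text{\tiny {ID}} }(K_2) = 1 = |V(K_2)|-1$), so excluding only $A_1$ itself is correct, and that any join in which both factors are nontrivial automatically has an edge. In fact, since $\mathcal{A}$ is the closure of $\{A_i\}$ under $\join$, every element other than $A_1$ either is some $A_i$ with $i \geq 2$ or is a join of two nonempty graphs, and both of these families contain at least one edge; this is the only place where the hypothesis $G \ncong A_1$ is used.
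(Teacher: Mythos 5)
Your overall route is the intended one: the paper states this corollary without a separate proof, and the implicit argument is exactly your bookkeeping --- the lower bound $\gamma^{\text{\tiny {ID}} }(G)\geq \gamma^{\text{\tiny {S}} }(G)=|V(G)|-1$ from Proposition~\ref{IDCodeofJoins}, matched by an upper bound of $|V(G)|-1$. (The paper gets the upper bound slightly differently, from Proposition~\ref{prop:A2k} for the base graphs $A_k$, $k\geq 2$, and from the remark after Proposition~\ref{MinCodeOfJoins} that a separating set of a join of two nontrivial graphs is automatically dominating; your appeal to Theorem~\ref{thm:existence} is an equally valid and arguably cleaner way to close this.) One local justification in your twin-freeness step is wrong as stated, though: for $x_1\in V(G_1)$ and $x_2\in V(G_2)$ one has $B_1(x_1)=B_1^{G_1}(x_1)\cup V(G_2)$ and $B_1(x_2)=B_1^{G_2}(x_2)\cup V(G_1)$, and these coincide precisely when $x_1$ is universal in $G_1$ and $x_2$ is universal in $G_2$ --- not only when $|V(G_1)|=|V(G_2)|=1$ (e.g.\ $K_2\join K_2=K_4$ has cross-part twins). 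The conclusion still holds for $\mathcal{A}$, but the correct reason is that no element of $\mathcal{A}$ has a universal vertex: $\Delta(A_k)=2k-2<|V(A_k)|-1$, and a vertex universal in $G_1\join G_2$ must be universal in its own part, so the property propagates through the closure under $\join$ by induction. With that one-line repair (or by simply invoking, as you suggest, that Proposition~\ref{IDCodeofJoins} already presupposes twin-freeness for $\gamma^{\text{\tiny {S}} }$ to be defined), your proof is complete and correct.
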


Further examples of graphs extremal with respect to separating sets and identifying codes
can be obtained by adding a universal vertex to each of the graphs in $\mathcal A$, as we
prove below.

\begin{proposition}
 For every graph $G$ in $\mathcal A \join K_1$ we have
$\gamma^{\text{\tiny{ID}} }(G)=\gamma^{\text{\tiny {S}}}(G)=|V(G)|-1$.
\end{proposition}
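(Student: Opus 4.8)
The plan is to realise every $G$ in $\mathcal A \join K_1$ as $G = H \join K_1$ with $H \in \mathcal A$ and $V(K_1) = \{u\}$ the universal vertex, and then to invoke Proposition~\ref{MinCodeOfJoins} with $G_1 = H$ and $G_2 = K_1$. First I would record two facts about $\mathcal A$, both proved by a straightforward induction on its construction: every element of $\mathcal A$ is twin-free, and no element of $\mathcal A$ has a universal vertex. For the base case, $A_k$ is twin-free and $\Delta(A_k) = 2k-2 < |V(A_k)| - 1$; for the inductive step one uses that a vertex of $G_1 \join G_2$ lying in $V(G_i)$ is universal in the join (resp.\ has a twin inside $V(G_i)$) precisely when it is so in $G_i$, while a vertex of $V(G_1)$ and a vertex of $V(G_2)$ can be twins in the join only if each is universal in its own factor. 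From the second fact it follows that $G = H \join K_1$ is twin-free: for $x,y \in V(H)$ we have $B_1(x) = B_1(y)$ in $G$ iff the same holds in $H$, and $B_1(u) = V(G)$ cannot equal $B_1(x) = B_1^H(x) \cup \{u\}$ unless $x$ is universal in $H$. In particular $\gamma^{\text{\tiny {ID}} }(G)$ is well defined.

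Next I would check the remaining hypotheses of Proposition~\ref{MinCodeOfJoins}. By Proposition~\ref{IDCodeofJoins}, $\gamma^{\text{\tiny {S}} }(H) = |V(H)| - 1$ and every minimum separating set of $H$ has an $S$-universal vertex. For $K_1$ there is nothing to separate, so its unique (hence minimum) separating set is $\emptyset$, giving $\gamma^{\text{\tiny {S}} }(K_1) = 0$, and $u$ is vacuously $\emptyset$-universal. Thus Proposition~\ref{MinCodeOfJoins} applies and yields
\[
\gamma^{\text{\tiny {S}} }(G) = \gamma^{\text{\tiny {S}} }(H) + \gamma^{\text{\tiny {S}} }(K_1) + 1 = (|V(H)| - 1) + 0 + 1 = |V(G)| - 1 .
\]

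Finally I would upgrade this to identifying codes. Since $|V(H)| \geq 2$, the vertex $u$ has degree at least $2$ in $G$, so $G$ is twin-free with at least one edge, and Theorem~\ref{thm:existence} gives $\gamma^{\text{\tiny {ID}} }(G) \leq |V(G)| - 1$; the general inequality $\gamma^{\text{\tiny {ID}} }(G) \geq \gamma^{\text{\tiny {S}} }(G)$ together with the previous display provides the matching lower bound. Hence $\gamma^{\text{\tiny {ID}} }(G) = \gamma^{\text{\tiny {S}} }(G) = |V(G)| - 1$. The only points needing care are the degenerate role of the factor $K_1$ in Proposition~\ref{MinCodeOfJoins} — which amounts to reading $\gamma^{\text{\tiny {S}} }(K_1) = 0$ correctly together with the vacuous $\emptyset$-universality of $u$ — and the fact that adjoining a universal vertex preserves twin-freeness here, which is exactly why one first checks that members of $\mathcal A$ have no universal vertex; the rest is a direct application of the preliminary results.
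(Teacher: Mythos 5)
Your proof is correct, but it is packaged differently from the paper's. The paper does not apply Proposition~\ref{MinCodeOfJoins} with a $K_1$ factor; instead it argues directly on a minimum separating set $S$ of $G=G_1\join K_1$: the trace $S\cap V(G_1)$ is a separating set of $G_1$, so has size at least $|V(G_1)|-1$, and if equality held then Proposition~\ref{IDCodeofJoins} would give an $[S\cap V(G_1)]$-universal vertex of $G_1$ that $S$ fails to separate from $u$; hence $S=V(G_1)$, which is then checked to be an identifying code. You instead read Proposition~\ref{MinCodeOfJoins} with $G_2=K_1$, which requires exactly the degenerate bookkeeping you carry out ($\gamma^{\text{\tiny {S}}}(K_1)=0$, the vacuous $\emptyset$-universality of $u$, and twin-freeness of $H\join K_1$ via the absence of universal vertices in members of $\mathcal A$); note the paper's remark immediately after Proposition~\ref{MinCodeOfJoins} explicitly excludes $K_1$ factors for the identifying-code conclusion, which is presumably why the authors preferred to inline the argument. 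Your endgame also differs: you obtain $\gamma^{\text{\tiny {ID}}}(G)\le |V(G)|-1$ from Theorem~\ref{thm:existence} and match it with $\gamma^{\text{\tiny {ID}}}\ge\gamma^{\text{\tiny {S}}}$, whereas the paper exhibits $V(G_1)$ explicitly as an identifying code. What your route buys is reuse of the general proposition and a careful record of the twin-freeness facts the paper leaves implicit; what the paper's route buys is the slightly stronger conclusion that $V(G_1)$ is the \emph{unique} minimum separating set, and independence from the degenerate reading of the join proposition.
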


\begin{proof}
 Assume $G=G_1\join K_1$ with $G_1\in \mathcal A$,
and assume $u$ is the vertex corresponding to $K_1$.
Suppose $S$ is a minimum separating set of $G$. We first note that since $S \cap V(G_1)$
is a separating set of $G_1$, we have $|S\cap V(G_1)|\geq |V(G_1)|-1$. 
But if
$|S\cap V(G_1)|= |V(G_1)|-1$, then by
Proposition~\ref{IDCodeofJoins}, there is an $[S\cap V(G_1)]$-universal vertex $y$ of $G_1$.
Then $y$ is not separated from $x$. Thus $|S\cap V(G_1)|= |V(G_1)|$ and therefore 
$S= V(G_1)$. It is easy to check that $S$ is also an identifying code.
\end{proof}

It was proved in~\cite{CHL07} that $\gamma^{\text{\tiny{ID}}}(K_n\setminus M)=n-1$
where $K_n\setminus M$ is the complete graph minus a maximal matching.
We note that this graph, for even values of $n$, is the join of $\frac n 2$ disjoint copies of $A_1$, thus it belongs to $\mathcal A$. For odd values of $n$, it is built from the previous graph by adding a universal vertex.

So far we have seen that $\gamma^{\text{\tiny{ID}} }(G)=|V(G)|-1$ for
$G\in \{K_{1,t}~|~t\geq 2\}\cup \mathcal A \cup (\mathcal A \join K_1)$, $G\not \cong A_1$.
We also know that $\gamma^{\text{\tiny{ID}} }(\overline {K}_n)=n$.
More examples of graphs with $\gamma^{\text{\tiny{ID}} }(G)=|V(G)|-1$ 
can be obtained by adding isolated vertices.
In the next theorem we show that for any other twin-free graph $G$ we have 
$\gamma^{\text{\tiny{ID}} }(G)\leq |V(G)|-2$.

\begin{theorem}\label{thm:classification}
Given a connected graph $G$, we have $\gamma^{\text{\tiny{ID}} }(G)=|V(G)|-1$ if and only if 
 $G \in \{K_{1,t}~|~t\geq 2\}\cup \mathcal A \cup (\mathcal A \join K_1)$ and $G\not \cong A_1$.
\end{theorem}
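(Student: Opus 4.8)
The plan is to prove the hard direction: if $G$ is connected, twin-free, and $\gamma^{\text{\tiny{ID}}}(G)=|V(G)|-1$, then $G$ lies in the claimed family. I would argue by induction on $n=|V(G)|$. The base cases (small $n$) are handled by direct inspection, noting that $K_{1,2}$ must be excluded since $\gamma^{\text{\tiny{ID}}}(K_{1,2})=2=n-1$ but $K_{1,2}\cong A_1$ is excluded on a technicality — wait, actually $K_{1,2}$ has $3$ vertices, so it is genuinely in $\{K_{1,t}\}$; the excluded $A_1=\overline{K_2}$ is disconnected anyway, so the clause ``$G\not\cong A_1$'' is only relevant to keep $A_1$ out of $\mathcal A$. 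For the inductive step, apply Corollary~\ref{GoodCorollary}: since $G\ncong K_{1,2}$, there is a vertex $x$ with $G-x$ connected and $\gamma^{\text{\tiny{ID}}}(G-x)=|V(G-x)|-1$. By induction $G-x$ belongs to the family. The task is then to understand, for each graph $H$ in the family $\{K_{1,t}\}\cup\mathcal A\cup(\mathcal A\join K_1)$, which twin-free graphs $G$ satisfy $G-x\cong H$ and $\gamma^{\text{\tiny{ID}}}(G)=|V(G)|-1$, and to check they are all again in the family.

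The key structural tool is that $\gamma^{\text{\tiny{ID}}}(G)=|V(G)|-1$ is a strong constraint: no two vertices of $G$ can be simultaneously removable. Concretely, using Lemma~\ref{4-independent}, if $G$ had three vertices pairwise at distance $\ge 4$, each individually removable, then two could be removed together, a contradiction — so the set of individually-removable vertices is not $4$-independent of size $3$, forcing $G$ to have small diameter or few removable vertices. More usefully, I would show that whenever $V(G)\setminus\{x\}$ and $V(G)\setminus\{y\}$ are both identifying codes with $x\ne y$, the vertices $x,y$ must be ``close'' (distance at most $3$), and typically $G$ has a rich set of high-degree vertices; the join structure of $\mathcal A$ elements (each built from $A_i$'s via $\join$, hence with many universal-in-a-block vertices) is exactly what enforces $\gamma^{\text{\tiny{S}}}=n-1$ via Proposition~\ref{IDCodeofJoins} and Proposition~\ref{MinCodeOfJoins}. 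So the reverse direction of the equivalence is already done in the excerpt (the propositions preceding the theorem establish that every listed graph achieves the bound); only the forward direction needs this induction.

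For the case analysis in the inductive step: if $G-x\cong K_{1,t}$, one checks by hand how $x$ can attach to a star while keeping $G$ twin-free and extremal — the outcomes should be a larger star, or a graph in $\mathcal A\join K_1$ (e.g. attaching to the center gives $K_{1,t}\join K_1$-like structures), or nothing new. If $G-x\in\mathcal A$, say $G-x=H_1\join\cdots\join H_m$ with each $H_j=A_{i_j}$, I would analyze the neighborhood $N(x)$ in $G-x$: twin-freeness of $G$ and Lemma~\ref{lemma:twins_of_the_twins} (controlling where twins can appear after deleting a vertex) sharply restrict $N(x)$. The expected conclusion is that $x$ must be adjacent to everything (so $G=(G-x)\join K_1\in\mathcal A\join K_1$), or $x$ together with part of some block forms a new $A_i$ block (so $G$ is again a join of $A_i$'s, i.e. in $\mathcal A$), or one block is $A_1=\overline{K_2}$ and $x$ extends it. If $G-x\in\mathcal A\join K_1$, write $G-x=(G'\join K_1)$ with $G'\in\mathcal A$ and universal vertex $u$; then in $G$ either $x$ is adjacent to $u$, making things collapse into a join structure handled as before, or $x$ is non-adjacent to $u$, and I would derive a contradiction with extremality (two removable vertices) unless $G'$ has a very special form.

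The main obstacle I anticipate is precisely this last case analysis: showing that attaching a vertex to a join-of-$A_i$'s graph cannot produce a twin-free extremal graph outside the family. The danger is that some clever attachment yields a new extremal graph the theorem misses — this is exactly the kind of thing that killed the earlier conjectures of~\cite{S07, CCCHL08}. The right bookkeeping device is to track, for the candidate $G$, a minimum separating set and show it must equal $V(G)\setminus\{x\}$ or $V(G)\setminus\{y\}$ for the unique possible pair, then read off that every ``non-code'' vertex is universal within its block. I would lean heavily on Proposition~\ref{MinCodeOfJoins}'s converse (``if $S$ has the extremal size then there is an $S$-universal vertex'') to propagate the block structure inductively through the join decomposition, and on Lemma~\ref{lemma:twins_of_the_twins} to rule out spurious attachments. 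Once the structure of $N(x)$ is pinned down, verifying membership in the family is routine.
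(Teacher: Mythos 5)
Your skeleton is the paper's skeleton: induction on $n$, Corollary~\ref{GoodCorollary} to find a removable vertex $x$ with $G-x$ connected and still extremal, the induction hypothesis to place $G-x$ in the family, and then a case analysis over the three sub-families. The ``if'' direction you correctly identify as already established by the preceding propositions. But the proposal stops exactly where the proof begins: the case analysis, which you yourself name as ``the main obstacle I anticipate,'' is the entire substance of the argument, and you do not carry out any of it. What is missing is not bookkeeping but the decisive constructions. In the case $G-x\cong A_k$ with $x$ not universal on $A_k$, the paper exhibits a concrete pair of candidate codes $V(G)\setminus\{x_1,x\}$ and $V(G)\setminus\{x_{2k},x\}$ and shows, by locating the unique vertex $x_{i+k}$ (or $x_1$) that $x$ might fail to be separated from, that one of the two always works; nothing in your sketch produces these codes. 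In the case $G-x=G_1\join G_2$ with $x$ missing a neighbour $y\in V(G_1)$, the closing step is a parity argument --- $G_1+x$ would have to lie in $\mathcal A$, but every graph in $\mathcal A$ has even order --- which does not appear in your plan at all.

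Moreover, your anticipated resolution of the case $G-x\in\mathcal A\join K_1$ with $x$ not adjacent to the universal vertex $u$ is pointed in the wrong direction: you expect ``a contradiction with extremality (two removable vertices) unless $G'$ has a very special form,'' but this case genuinely produces new members of the family rather than contradictions. If $G-u\in\mathcal A\join K_1$ then $u$ and the other universal vertex induce a copy of $A_1$ and $G\in\mathcal A$; and if $G-u$ has twins, Lemma~\ref{lemma:twins_of_the_twins} forces $x$ to be twinned with some $z_l$ in a block $A_{i_1}$, the subcase $l\ge 2$ is killed by exhibiting the explicit code $V(G)\setminus\{z_l,z_{2k}\}$, and the surviving subcase $l=1$ is precisely the one where $V(A_{i_1})\cup\{u,x\}$ induces $A_{i_1+1}$, so that $G\cong A_{i_1+1}\join A_{i_2}\join\cdots\join A_{i_j}\in\mathcal A$. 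Recognizing this block-growth mechanism (and the dual mechanism whereby two universal vertices pair up into an $A_1$) is the structural heart of the classification --- it is how the family $\mathcal A$ closes up under the induction --- and without it the argument cannot be completed, only restated as a plan.
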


\begin{proof}
The ``if'' part of the theorem is already proved. The proof of the 
``only if'' part is based on induction
on the number of vertices of $G$.
For graphs on at most 4 vertices this is easy to check.
Assume the claim is true for graphs on at most $n-1$ vertices and, by 
contradiction, let $G$ be a twin-free graph
on $n\geq 5$ vertices such that $\gamma^{\text{\tiny{ID}} }(G)=n-1$ and 
$G \notin \{K_{1,t}~|~t\geq 2\}\cup \mathcal A \cup (\mathcal A \join K_1)$.

By Corollary~\ref{GoodCorollary} there is a vertex $x\in V(G)$ such that
$G-x$ is connected and $\gamma^{\text{\tiny{ID}} }(G-x)=|V(G-x)|-1$. By the induction hypothesis we have
$G-x \in \{K_{1,t}~|~t\geq 2\}\cup \mathcal A \cup (\mathcal A \join
K_1)$.
Depending on which one of these 3 sets $G-x$ belongs to, we will have 3
cases.

{\bf Case 1}, $G-x \in \{K_{1,t}~|~t\geq 2\}$.
In this case we consider a minimum identifying code $C$ of $G-x$. If $C$ does not already
identify $x$ then either $deg(x) \leq 3$ or $deg(x) \geq n-2$. 
We leave it to the reader to check that in each of these cases, there is 
an identifying code of size $n-2$.

{\bf Case 2}, $G-x \in \mathcal A $. We consider two subcases. Either $G-x \cong
A_k$ for some $k$ or $G-x= G_1\join G_2$, with $G_{1}, G_{2} \in \mathcal A$. 

\begin{itemize}
 \item[(1)] $G-x \cong A_k$, for some $k\geq 2$. If $x$ is adjacent to all
the vertices of $G-x$, then 
$G\in \mathcal A\join K_1$ and we are done. Otherwise there is a pair of consecutive vertices 
of $A_k$, say $x_i$ and $x_{i+1}$, such that one is adjacent to $x$ and the other is not. 
By the symmetry of $A_k$ we may assume $i\leq k$. We claim that $C=V(G) \setminus \{x_1, x\}$
or $C'=V(G)\setminus \{x_{2k}, x\}$ is an identifying code of $G$. This would 
contradict our assumption.
We first consider $C$ and note that $C\cap V(A_k)$ is an identifying code of $A_k$.
If $x$ is also separated from all the vertices of $G-x$ then we are done. Otherwise there will
be two possibilities.

First we consider the possibility: $x$ is not adjacent to $x_i$ and
adjacent to $x_{i+1}$. In this case each vertex $x_j$, $j> i+k$,
is separated from $x$ by $x_{i+1}$ and each vertex $x_j$, $j<i+k$, is separated from $x$
by $x_i$. Thus $x$ is not separated from $x_{i+k}$. In the other
possibility, $x$ is adjacent to $x_i$ and 
not adjacent to $x_{i+1}$. A similar argument implies that $x$ is separated from every vertex but 
$x_1$. In either of these two possibilities, $C'$ would be an identifying code.

\item[(2)] $G-x\cong G_1 \join G_2 $ with $G_1, G_2 \in \mathcal A$. If $x$ is adjacent to 
all the vertices of $G-x$, then $G \in \mathcal A \join K_1$ and we are done. Thus there is
a vertex, say $y$, that is not adjacent to $x$. Without loss of
generality, we can assume $y \in V(G_{1})$.
Let $C_1$ be an identifying code of size $\gamma^{\text{\tiny{ID}} }(G_1)=|V(G_1)|-1$
of $G_1$ which contains $y$. The existence of such an identifying code becomes
apparent from
the proof of Proposition~\ref{IDCodeofJoins}. 
Then $C=C_1\cup V(G_2)$ is an identifying code of $G_1\join G_2$ of size 
$|V(G_1\join G_2)| -1=|V(G)|-2$. Thus $C$ does not separate a vertex of $G_1\join G_2$ from $x$. 
Call this vertex $z$. 
Since $y\in C$, $z$ is not adjacent to $y$, hence $z\in V(G_1)$.
Therefore, $z$ is adjacent to all the vertices of $G_2$. So $x$ should also be
adjacent to all the vertices of $G_2$. Thus we have $G=(G_1+x)\join G_2$ and any minimum 
identifying code of $G_1+x$ together with all vertices of $G_2$ would 
form an identifying code of $G$. This proves that $\gamma^{\text{\tiny{ID}} }(G_1+x)=|V(G_1+x)|-1$. 
Since $G_1+x$ has less vertices than $G$, by induction hypothesis, we have 
$G_1 +x \in \{K_{1,t}~|~t\geq 2\}\cup \mathcal A \cup (\mathcal A \join
K_1)$ and $G\not \cong A_1$.
Since $G_1 \in \mathcal A$, and since $x$ is not adjacent to a vertex of
$G_1$, we should have
$G_1+x\in \mathcal A$ but all graphs in $\mathcal A$ have an even number of vertices and this is not possible.

\end{itemize}

{\bf Case 3}, $G-x \in \mathcal A \join K_1$. Suppose 
$G-x \cong A_{i_1}\join A_{i_2} \join \ldots \join A_{i_j}\join K_1$ and let $u$
be the vertex corresponding to $K_1$.

If $x$ is also adjacent to $u$, then $u$ is a universal vertex of $G$ and 
$G-u$ is also twin-free. In this case we apply the induction on $G-u$:
by Proposition~\ref{GeneralizedGoodProposition},
$\gamma^{\text{\tiny{ID}} }(G-u)=|V(G-u)|-1$
and by induction hypothesis 
$G-u \in \{K_{1,t}~|~t\geq 2\}\cup \mathcal A \cup (\mathcal A \join
K_1)$. But if $ G-u \in \{K_{1,t}~|~t\geq 2\}\cup (\mathcal A \join K_1)$, there will be two 
universal vertices, and therefore twins.
Thus $G-u \in \mathcal A $ and $G \in \mathcal A \join K_1$.

We now assume $x$ is not adjacent to $u$ and we repeat the argument with $G-u$ 
if it is twin-free. In this case if 
$G-u \in \{K_{1,t}~|~t\geq 2\}\cup \mathcal A $, we apply Case 1 or Case 2.
If $G-u\in \mathcal A \join K_1$ with $u'$ being the vertex of $K_1$,
then $u$ and $u'$ induce an isomorphic copy of $A_1$ and $G\in \mathcal A$. 

If $G-u$ is not twin-free then, by Lemma \ref{lemma:twins_of_the_twins},
$x$ must be one of the
twin vertices. Let $x'$ be its twin and suppose $x' \in V(A_{i_1})$ with
$V(A_{i_1})=\{z_1,z_2, \ldots, z_{2k}\}$. Without loss of generality we may assume $x'=z_l$
with $l\leq k$. If $ l\geq 2$, then we claim $C=V(G)\setminus \{z_l, z_{2k}\}$ is an 
identifying code of $G$ which is a contradiction. To prove our claim 
notice first that vertices of $A_{i_2}\join \cdots \join A_{i_j}$ are already identified
from each other and from the other vertices. Now each pair of
vertices of $A_{i_1}$ is separated by a vertex in $V(A_{i_1})\cap C$ except $z_{l+k-1}$ and
$z_{l+k}$ which are separated by $x$. The vertex $x$ is also separated from all the
other vertices by $u$. It remains to show that $u$ is separated from vertices of $A_{i_1}$.
It is separated from vertices in $\{z_1,\ldots, z_{l+k-1}\}$ by $x$ and from
$\{z_{k+1}, \ldots, z_{2k}\}$ by $z_1$ ($l\geq 2)$. 
Thus $x'=x_1$ and now it is easy to see that the subgraph induced by $V(A_{i_1})$, $u$ and
$x$ is isomorphic to $A_{i_1+1}$ and, therefore,
$G\cong A_{i_1+1}\join A_{i_2} \join \ldots \join A_{i_j}$.
\end{proof}

Since every graph in $\{K_{1,t}~|~t\geq 2\}\cup
\mathcal{A}\cup(\mathcal{A}\join K_1)$ has maximum degree $n-2$, we
have:

\begin{corollary}
  Let $G$ be a twin-free connected graph on $n\geq 3$ vertices and maximum degree
  $\Delta\le n-3$. Then $\gamma^{\text{\tiny {ID}} }(G)\le n-2$.
\end{corollary}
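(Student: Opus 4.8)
The plan is to derive this corollary directly from Theorem~\ref{thm:classification}. The theorem characterizes the connected graphs $G$ with $\gamma^{\text{\tiny {ID}}}(G)=|V(G)|-1$ as exactly those lying in $\{K_{1,t}\mid t\ge 2\}\cup\mathcal{A}\cup(\mathcal{A}\join K_1)$ (excluding $A_1$, which is disconnected anyway). So the contrapositive strategy is: suppose $G$ is a twin-free connected graph on $n\ge 3$ vertices with $\gamma^{\text{\tiny {ID}}}(G)\ge n-1$. Since $\gamma^{\text{\tiny {ID}}}(G)\le n-1$ by Theorem~\ref{thm:existence} (as $G$ has an edge, being connected on $n\ge 2$ vertices), we in fact have $\gamma^{\text{\tiny {ID}}}(G)=n-1$, so Theorem~\ref{thm:classification} applies and $G$ belongs to one of the three families above. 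The goal is then to show each such graph has a vertex of degree $n-2$, contradicting $\Delta\le n-3$.

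First I would handle $K_{1,t}$ with $t\ge 2$: here $n=t+1$ and the center vertex is universal, so it has degree $n-1>n-2$; in particular $\Delta\ge n-2$. Next, for $G\in\mathcal{A}$: by definition $G$ is an iterated join $A_{i_1}\join A_{i_2}\join\cdots\join A_{i_m}$ of basic graphs $A_k$, where $A_k$ has $2k$ vertices and (as noted in the excerpt) $\Delta(A_k)=2k-2=|V(A_k)|-2$. A vertex $v$ of maximum degree in $A_{i_1}$ is adjacent to all $2i_1-2$ other vertices of $A_{i_1}$ except one, and by the join structure it is adjacent to every vertex outside $A_{i_1}$; hence $\deg_G(v)=(2i_1-2)+(n-2i_1)=n-2$. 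Finally, for $G\in\mathcal{A}\join K_1$, say $G=H\join K_1$ with $H\in\mathcal{A}$: the vertex $u$ corresponding to $K_1$ is universal in $G$, so $\deg_G(u)=n-1\ge n-2$. In all three cases $\Delta(G)\ge n-2$, contradicting the hypothesis $\Delta\le n-3$. Therefore $\gamma^{\text{\tiny {ID}}}(G)\le n-2$.

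The only mild subtlety — and really the only thing worth checking carefully — is the borderline exclusions: we should confirm the hypotheses of Theorem~\ref{thm:classification} genuinely apply, i.e.\ that $G$ is twin-free (given) and connected (given) and that the ``only if'' direction puts $G$ in the listed families. One should also note $G\not\cong A_1=\overline{K_2}$ since $n\ge 3$ and $G$ is connected, and that $K_{1,1}=K_2$ is not in the list but has $n=2<3$, so it is excluded by the size hypothesis; thus no degenerate case slips through. I do not anticipate any real obstacle here — the corollary is essentially a one-line consequence of the classification together with the elementary observation, already recorded in the excerpt, that every member of the three families has a vertex of degree $n-2$. If desired, the proof can be compressed to a single sentence: ``Since every graph in $\{K_{1,t}\mid t\ge 2\}\cup\mathcal{A}\cup(\mathcal{A}\join K_1)$ has maximum degree $n-2$ (a universal vertex for $K_{1,t}$ and for $\mathcal{A}\join K_1$; a maximum-degree vertex of the first join-factor for $\mathcal{A}$), the hypothesis $\Delta\le n-3$ forces $G$ to lie outside these families, whence $\gamma^{\text{\tiny {ID}}}(G)\le n-2$ by Theorem~\ref{thm:classification}.''
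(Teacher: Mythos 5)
Your proof is correct and follows essentially the same route as the paper, which states this corollary as an immediate consequence of Theorem~\ref{thm:classification} preceded only by the remark that every graph in $\{K_{1,t}\mid t\ge 2\}\cup\mathcal{A}\cup(\mathcal{A}\join K_1)$ has maximum degree $n-2$. Your write-up merely makes that remark explicit via the three-case degree check (and correctly observes that for $K_{1,t}$ and $\mathcal{A}\join K_1$ the relevant vertex in fact has degree $n-1$, so the right uniform statement is $\Delta\ge n-2$), together with the standard contrapositive reading of the classification.
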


\section{Infinite graphs}\label{sec:infinite}

It is shown in~\cite{CHL07} that Theorem~\ref{thm:existence} does not
have a direct extension to the family of infinite graphs. In other
words, there are nontrivial examples of twin-free infinite graphs
requiring the whole vertex set for any identifying code. The basic
example of such infinite graphs, originally defined in~\cite{CHL07},
is given below. In this section, we classify all such infinite
graphs. This strengthens a theorem of~\cite{GM07}, which claims that
there are no such infinite graphs in which all vertices have finite degrees.

\begin{definition}
 Let $X=\{\ldots, x_{-1},x_0,x_1,\ldots \}$ and $Y=\{\ldots, y_{-1},y_0,y_1,\ldots \}$.
$A_{\infty}=(X\cup Y, E)$ is the graph on $X\cup Y$ having edge set
$E=\{ x_ix_j~|~i\neq j\}\cup \{ y_iy_j ~|~i\neq j\}\cup \{ x_iy_j~|~i<j\}$.
\end{definition}

See Figure~\ref{FigureA_infty} for an illustration.

    \begin{figure}[ht]
        \begin{center}
\scalebox{0.9}{\begin{tikzpicture}[join=bevel,inner sep=0.5mm,]
  \node[scale=2] (emptyX1)  at (47bp,200bp) {$...$};
  \node (x_minus2) at (90bp,200bp) [draw, circle, fill=black] {};
  \node  at (87bp,210bp) {$x_{-2}$};
  \node (x_minus1) at (130bp,200bp) [draw, circle, fill=black] {};
  \node  at (127bp,210bp) {$x_{-1}$};
  \node (x_0) at (170bp,200bp) [draw, circle, fill=black] {};
  \node  at (167bp,210bp) {$x_0$}; 
  \node (x_1) at (210bp,200bp) [draw, circle, fill=black] {};
  \node  at (207bp,210bp) {$x_1$};
  \node (x_2) at (250bp,200bp) [draw, circle, fill=black] {};
  \node  at (247bp,210bp) {$x_2$};
  \node[scale=2] (emptyX2) at (287bp,200bp) {$...$};

  \node[scale=2] (emptyY1) at (47bp,110bp) {$...$};
  \node (y_minus2) at (90bp,110bp) [draw, circle, fill=black] {};
  \node  at (87bp,100bp) {$y_{-2}$}; 
  \node (y_minus1) at (130bp,110bp) [draw, circle, fill=black] {};
  \node  at (127bp,100bp) {$y_{-1}$};  
  \node (y_0) at (170bp,110bp) [draw, circle, fill=black] {};
  \node  at (167bp,100bp) {$y_0$};  
  \node (y_1) at (210bp,110bp) [draw, circle, fill=black] {};
  \node  at (207bp,100bp) {$y_1$};
  \node (y_2) at (250bp,110bp) [draw, circle, fill=black] {};
  \node  at (247bp,100bp) {$y_2$}; 
  \node[scale=2] (emptyY2)  at (287bp,110bp) {$...$};

  \draw [-] (emptyX1) -- node[above,sloped] {} (y_minus2);
  \draw [-] (emptyX1) -- node[above,sloped] {} (y_minus1);
  \draw [-] (emptyX1) -- node[above,sloped] {} (y_0);
  \draw [-] (emptyX1) -- node[above,sloped] {} (y_1);
  \draw [-] (emptyX1) -- node[above,sloped] {} (y_2);
  \draw [-] (emptyX1) -- node[above,sloped] {} (emptyY2);
  \draw [-] (x_minus2) -- node[above,sloped] {} (y_2);
  \draw [-] (x_minus2) -- node[above,sloped] {} (y_minus1);
  \draw [-] (x_minus2) -- node[above,sloped] {} (y_0);
  \draw [-] (x_minus2) -- node[above,sloped] {} (y_1);
  \draw [-] (x_minus2) -- node[above,sloped] {} (emptyY2);
  \draw [-] (x_minus1) -- node[above,sloped] {} (y_1);
  \draw [-] (x_minus1) -- node[above,sloped] {} (y_2);
  \draw [-] (x_minus1) -- node[above,sloped] {} (y_0);
  \draw [-] (x_minus1) -- node[above,sloped] {} (emptyY2);
  \draw [-] (x_0) -- node[above,sloped] {} (y_1);
  \draw [-] (x_0) -- node[above,sloped] {} (y_2);
  \draw [-] (x_0) -- node[above,sloped] {} (emptyY2);
  \draw [-] (x_1) -- node[above,sloped] {} (y_2);
  \draw [-] (x_1) -- node[above,sloped] {} (emptyY2);
  \draw [-] (x_2) -- node[above,sloped] {} (emptyY2);
  \node at (405bp,200bp) {Infinite clique on $X$};
\draw[line width=1pt, draw = black] (167bp,203bp) ellipse(170bp and 25bp) node{};
  \node at (405bp,110bp) {Infinite clique on $Y$};
  \draw[line width=1pt, draw = black] (167bp,108bp) ellipse(170bp and 25bp) node{};  
  \end{tikzpicture}}
  \end{center}
  \caption{The graph $A_{\infty}$ which needs all its vertices for any identifying code}
\label{FigureA_infty}
\end{figure}
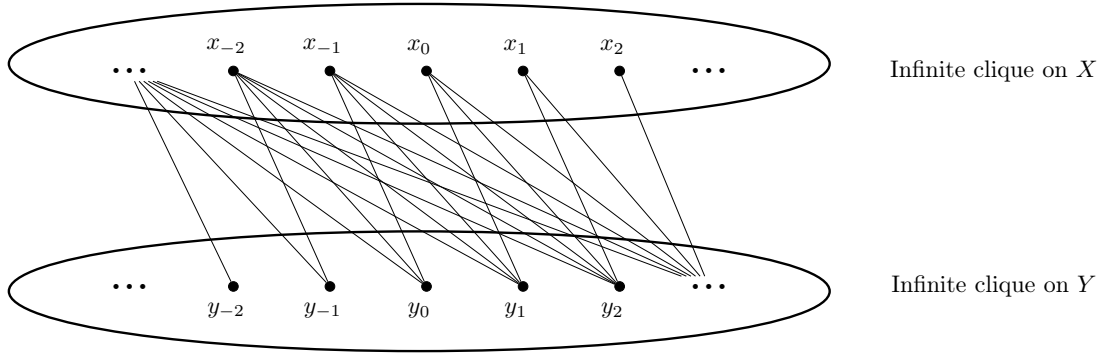

It is shown in~\cite{CHL07} that the only separating set of
$A_{\infty}$ is $V(A_{\infty})$.  One should note that the graph
induced by $\{y_1,y_2, \ldots, y_k, x_1, x_2 \ldots, x_k \}$ is
isomorphic to the graph $A_k$.

Before introducing our theorem let us see again why every separating set 
of $A_{\infty}$ needs the whole vertex set: for every $i$,
$x_{i}$ and $x_{i+1}$ are only separated by $y_{i+1}$, while $y_{i}$
and $y_{i+1}$ are separated only by $x_i$.

This property would still hold if we add a new vertex which is adjacent either to all
vertices in $X$ (similarly in $Y$) or to none. This leads to the
following family:

Let $H$ be a finite or infinite simple graph with a perfect matching $\rho$, that is a
mapping $x \rightarrow \rho(x)$ of $V(H)$ to itself such that $\rho^2(x)=x$
and $x\rho(x)$ is an edge of $H$. We define $\Psi(H,\rho)$ to be the graph built as follows:
for every vertex $x$ of $H$ we assign $\Phi(x)=\{ \ldots x_{-1}, x_0, x_1,
\ldots \}$.
The vertex set of $\Psi(H,\rho)$ is $\displaystyle {\bigcup_{x\in V(H)} \Phi(x)}$.
For each edge $x\rho(x)$ of $H$ we build a copy of $A_{\infty}$ on
$\Phi(x)\cup \Phi(\rho(x))$
and for every other edge $xy$ of $H$ we join every vertex in $\Phi(x)$ to every
vertex in $\Phi(y)$. An example of such construction is illustrated in 
Figure~\ref{ConstructionOfPsi}.

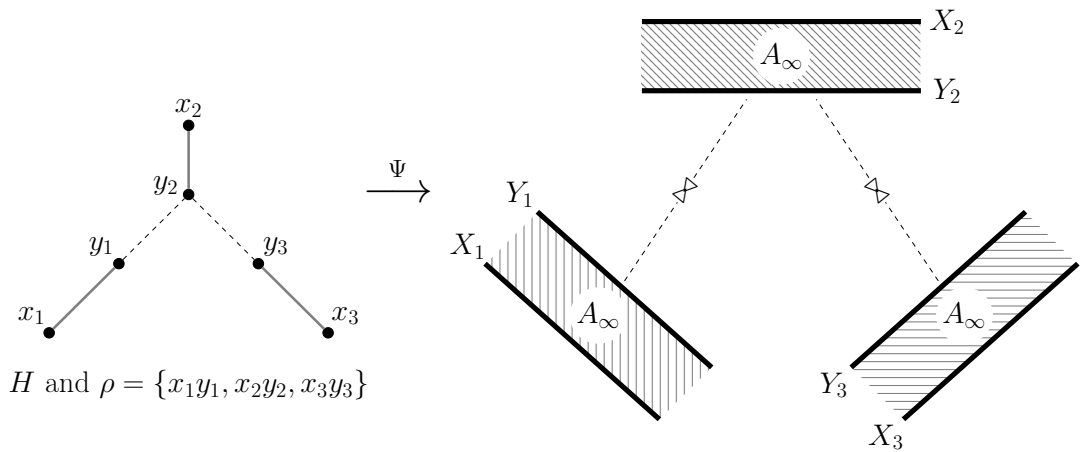
\begin{figure}[ht]
        \begin{center}
\scalebox{0.65}{\begin{tikzpicture}[join=bevel,inner sep=0.5mm,]

  \node (x_1) at (0bp,10bp) [draw, scale=1.5pt, circle, fill=black] {};
  \node  at (-10bp,20bp) {\LARGE $x_1$}; 
  \node (y_1) at (40bp,50bp) [draw, scale=1.5pt, circle, fill=black] {};
  \node  at (30bp,60bp) {\LARGE $y_1$};  
  \node (y_2) at (80bp,90bp) [draw, scale=1.5pt, circle, fill=black] {};
  \node  at (67bp,95bp) {\LARGE $y_2$};  
  \node (x_2) at (80bp,130bp) [draw, scale=1.5pt, circle, fill=black] {};
  \node  at (80bp,140bp) {\LARGE $x_2$};
  \node (y_3) at (120bp,50bp) [draw, scale=1.5pt, circle, fill=black] {};
  \node  at (130bp,60bp) {\LARGE $y_3$}; 
  \node (x_3) at (160bp,10bp) [draw, scale=1.5pt, circle, fill=black] {};
  \node  at (170bp,20bp) {\LARGE $x_3$}; 

  \node  at (80bp,-20bp) {\LARGE $H$ and $\rho=\{x_1y_1, x_2y_2, x_3y_3\}$};
  
  \draw [-, line width=1.5pt, gray] (x_1) -- node[above,sloped] {} (y_1);
  \draw [-, dashed] (y_1) -- node[above,sloped] {} (y_2);
  \draw [-, line width=1.5pt, gray] (y_2) -- node[above,sloped] {} (x_2);
  \draw [-, dashed] (y_2) -- node[above,sloped] {} (y_3);
  \draw [-, line width=1.5pt, gray] (y_3) -- node[above,sloped] {} (x_3);

  \node  at (200bp,105bp) {\Large $\Psi$}; 
  \node  at (200bp,90bp) {\Huge $\longrightarrow$}; 

  \fill[pattern color=gray, draw=white, fill=white, pattern=vertical lines] (350bp,-40bp) -- (250bp,50bp) -- (280bp,80bp) --(380bp,-10bp);
  \draw [-, line width = 3pt] (350bp,-40bp) -- node[above,sloped] {} (250bp,50bp);
  \draw [-, line width = 3pt] (380bp,-10bp) -- node[above,sloped] {} (280bp,80bp);

  \node[circle, fill=white]  at (315bp,20bp) {\LARGE $A_{\infty}$};

  \node  at (270bp,90bp) {\LARGE $Y_1$};
  \node  at (240bp,60bp) {\LARGE $X_1$};

  \fill[pattern color=gray, draw=white, fill=white, pattern=horizontal lines] (490bp,-40bp) -- (590bp,50bp) -- (560bp,80bp) --  (460bp,-10bp);
  \draw [-, line width = 3pt] (490bp,-40bp) -- node[above,sloped] {} (590bp,50bp);
  \draw [-, line width = 3pt] (460bp,-10bp) -- node[above,sloped] {} (560bp,80bp);
  \node[circle, fill=white]  at (525bp,20bp) {\LARGE $A_{\infty}$};

  \node  at (450bp,-20bp) {\LARGE $Y_3$};
  \node  at (480bp,-50bp) {\LARGE $X_3$};

 \draw [-,dashed] (510bp,40bp) -- node[above,sloped] {} (440bp,145bp);
 \node[rotate=129,fill=white] (join2) at (475bp,92.5bp) {\LARGE $\join$};

 \draw [-,dashed] (330bp,40bp) -- node[above,sloped] {} (400bp,145bp);
  \node[rotate=50,fill=white] (join1) at (365bp,92.5bp) {\LARGE $\join$};

  \fill[pattern color=gray, draw=white, fill=white, pattern=north west lines] (340bp,190bp) -- (500bp,190bp) -- (500bp,150bp) --  (340bp,150bp);
  \draw [-, line width = 3pt] (340bp,190bp) -- node[above,sloped] {} (500bp,190bp);
  \draw [-, line width = 3pt] (340bp,150bp) -- node[above,sloped] {} (500bp,150bp);

  \node[circle, fill=white]  at (420bp,170bp) {\LARGE $A_{\infty}$};

  \node  at (515bp,190bp) {\LARGE $X_2$};
  \node  at (515bp,150bp) {\LARGE $Y_2$};

  \end{tikzpicture}}
  \end{center}
  \caption{Construction of $\Psi(H,\rho)$ from $(H, \rho)$}
\label{ConstructionOfPsi}
\end{figure}

We now have:

\begin{proposition}\label{Psi(G)}
 For every simple, finite or infinite, graph $H$ with a perfect matching $\rho$,
the graph $\Psi(H,\rho)$ can only be identified with $V(\Psi(H,\rho))$.
\end{proposition}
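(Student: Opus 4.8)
The goal is to show that the only separating set (equivalently identifying code) of $\Psi=\Psi(H,\rho)$ is the whole vertex set, so it suffices to show that for every vertex $w$ of $\Psi$ there is another vertex $w'$ such that $w$ is the unique vertex separating $w'$ from a third vertex; then any separating set must contain $w$. First I would fix some vertex $w\in\Phi(x)$, say $w=x_i$ for some $x\in V(H)$ and some index $i\in\mathbb Z$, and look inside the copy of $A_\infty$ built on $\Phi(x)\cup\Phi(\rho(x))$. Recall from the discussion of $A_\infty$ in the excerpt that, inside that copy, $x_{i-1}$ and $x_i$ are separated \emph{within the copy} only by the vertex $\rho(x)_i$ (using the indexing convention that $x_jy_k$ is an edge iff $j<k$ — so I will need to state explicitly which of $\Phi(x),\Phi(\rho(x))$ plays the role of the $x$-side and which the $y$-side, but by symmetry of $A_\infty$ this is harmless). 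The key point is that this separation must remain valid in the whole graph $\Psi$, not just in the copy.

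The main step, then, is a \textbf{``uniformity'' observation}: for any vertex $u$ of $H$ with $u\neq x$ and $u\neq\rho(x)$, the vertex $u$ is either adjacent to all of $\Phi(x)$ and all of $\Phi(\rho(x))$ (if $xu\in E(H)$ and $\rho(x)u\in E(H)$), or to exactly one of these two blobs entirely, or to neither; in particular, every vertex of $\Phi(u)$ sees $x_{i-1}$, $x_i$ and $\rho(x)_i$ in exactly the same way, so it cannot distinguish $x_{i-1}$ from $x_i$. Hence the only possible separators of the pair $\{x_{i-1},x_i\}$ lie inside $\Phi(x)\cup\Phi(\rho(x))$, and within that $A_\infty$-copy the unique separator is $\rho(x)_i$. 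Wait — I should double check that vertices of $\Phi(x)$ other than $x_i$, and vertices of $\Phi(\rho(x))$ other than $\rho(x)_i$, indeed fail to separate $x_{i-1}$ from $x_i$: inside $A_\infty$, $\{x_j : j\in\mathbb Z\}$ is a clique, so every $x_j$ is adjacent to both $x_{i-1}$ and $x_i$ (or we take $j$ to be one of $i-1,i$, in which case the separation is by that vertex itself, which is fine — it is still a vertex of the copy). And on the $y$-side, $\rho(x)_k$ is adjacent to $x_{i-1}$ iff $i-1<k$ and to $x_i$ iff $i<k$, so $\rho(x)_k$ separates the pair iff $k=i$. Since the pair cannot be separated by $x_{i-1}$ or $x_i$ themselves being \emph{absent}, the argument runs: for $\{x_{i-1},x_i\}$ to be separated at all, we need $\rho(x)_i$ in the set, OR we need one of $x_{i-1},x_i$ in the set. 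That last caveat means the pair $\{x_{i-1},x_i\}$ alone does not pin down $\rho(x)_i$; so to force an arbitrary vertex $w$ into the code I should instead argue as follows.

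So the clean route: given any vertex $w=\rho(x)_i\in\Phi(\rho(x))$ (every vertex of $\Psi$ is of this form for suitable choice of which block is the ``$y$-side''), by the uniformity observation the pair $\{x_{i-1},x_i\}$ is separated in $\Psi$ only by vertices of $\{x_{i-1},x_i,\rho(x)_i\}$. Now run a mild descent: if a separating set $C$ omits $w=\rho(x)_i$, then $C$ must contain $x_{i-1}$ or $x_i$; but then consider instead the pair $\{x_i,x_{i+1}\}$, separated only by $\rho(x)_{i+1}$ or by $x_i$ or $x_{i+1}$ — this does not immediately terminate. The honest fix is to note that in $A_\infty$ itself it is already proved (cited from \cite{CHL07}) that the only separating set is the whole vertex set; I would reprove or invoke this to get that $C$ restricted to $\Phi(x)\cup\Phi(\rho(x))$ must, together with the outside vertices (which by uniformity separate no pair \emph{inside} this copy), separate all pairs inside the copy, forcing $C\supseteq\Phi(x)\cup\Phi(\rho(x))$. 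Running this over every matching edge $x\rho(x)$ of $H$ gives $C=V(\Psi)$; and $\Psi$ is twin-free because each $A_\infty$-copy is (and blowing up distinct vertices of $H$ into complete or empty bipartite-looking joins preserves twin-freeness), so the identifying-code statement follows since a twin-free graph's minimum identifying code is at most one more than its minimum separating set but here both equal $|V(\Psi)|$.

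\textbf{Main obstacle.} The delicate part is the uniformity observation together with making precise that ``outside'' vertices are useless for separating pairs inside a given $A_\infty$-copy — one must handle the case where $u$ is adjacent to $\Phi(x)$ but not $\Phi(\rho(x))$ (or vice versa) and check that even then no vertex of $\Phi(u)$ distinguishes two vertices lying in the \emph{same} block $\Phi(x)$, nor two vertices $x_j,\rho(x)_k$, etc. This amounts to a finite case analysis over the adjacency pattern of $u$ to the pair of blocks, combined with the known fact that within $A_\infty$ only $V(A_\infty)$ separates. Everything else is bookkeeping with the $i<j$ convention.
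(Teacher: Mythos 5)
Your approach is the same as the paper's: observe that any vertex outside the copy of $A_\infty$ on $\Phi(x)\cup\Phi(\rho(x))$ treats a whole block $\Phi(x)$ uniformly, so the consecutive same-block pairs retain their unique separator from $A_\infty$, and this forces every vertex of every block into any separating set. The paper's proof is exactly this two-line argument.

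However, your write-up derails in the middle on a point that is simply false: you claim that the pair $\{x_{i-1},x_i\}$ could alternatively be separated by ``one of $x_{i-1},x_i$ in the set.'' It cannot. The vertices $x_{i-1}$ and $x_i$ both lie in the clique $\Phi(x)$, hence are adjacent, hence each belongs to \emph{both} balls $B_1(x_{i-1})$ and $B_1(x_i)$; neither lies in $B_1(x_{i-1})\Sym B_1(x_i)$. A vertex $v$ separates a pair only if it lies in the symmetric difference of their balls, so putting $x_{i-1}$ or $x_i$ into the code contributes nothing to separating that pair. Once you see this, the symmetric difference of the two balls in all of $\Psi(H,\rho)$ is exactly $\{\rho(x)_i\}$ (by your uniformity observation applied to same-block pairs), so $\rho(x)_i$ is forced into every separating set, and the proof terminates immediately --- no descent, no appeal to the full classification of separating sets of $A_\infty$, and no need for the delicate case analysis you describe as the main obstacle. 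A secondary inaccuracy: your fallback asserts that outside vertices ``separate no pair inside this copy,'' which is false for cross-block pairs $(x_j,\rho(x)_k)$ when some $u\in V(H)$ is adjacent to $x$ but not to $\rho(x)$; this happens to be harmless because the pairs that force vertices into the code are same-block pairs, but as stated the claim would not survive scrutiny. Fixing the first error makes the second one moot.
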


\begin{proof}
Let $A_x$ be the copy of $A_{\infty}$ which corresponds to the edge $x\rho(x)$. Then for every vertex $y$ in $V(\Psi(H,\rho))\setminus V(A_x)$, either $y$ is connected to every vertex in $A_x$ or to neither of them. Thus to separate vertices in $A_x$, we need all the vertices of $A_x$. Since $x$ is arbitrary, we need  all the vertices in $V(\Psi(H,\rho))$ in any separating set.
\end{proof}

In the next theorem we prove that every such extremal connected infinite graph is
$\Psi(H,\rho)$ for some connected finite or infinite graph $H$ together with a matching $\rho$.

\begin{theorem}\label{ClassifiactionInfinite}
 Let $G$ be an infinite connected graph. Then
a proper subset $C$ of $V(G)$ identifies all pairs of vertices of $G$ unless
$G=\Psi(H,\rho)$ for some finite or infinite graph $H$ together with a perfect matching $\rho$.
\end{theorem}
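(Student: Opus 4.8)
The plan is to prove the contrapositive: assuming $G$ is an infinite connected graph such that no proper subset of $V(G)$ is an identifying code (equivalently, $V(G)$ is the unique separating set of $G$), I construct the pair $(H,\rho)$ and show $G \cong \Psi(H,\rho)$. The key structural fact to exploit is the one highlighted before Proposition~\ref{Psi(G)}: if $V(G)$ is the only separating set, then for \emph{every} vertex $v$ there must be a vertex $v'$ such that $v$ and $v'$ are separated \emph{only} by $v$ itself; call such a $v'$ a \emph{partner} of $v$ (note $v \in B_1(v) \setminus B_1(v')$, so $v \in \Sym_1(v,v')$, and it is the unique element there). First I would establish basic properties of this partner relation: each $v$ has at least one partner; if $v'$ is a partner of $v$ then $v$ is a partner of $v'$ (the pair $\{v,v'\}$ is separated only by the two of them together — actually only-by-$v$ from one side and only-by-$v'$ from the other, so $\Sym_1(v,v') = \{v,v'\}$ and these are symmetric roles); and $v \sim v'$. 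Then I would show the partner is \emph{unique}: if $v'$ and $v''$ were both partners of $v$, one checks $v'$ and $v''$ would be separated only by a vertex in $\{v,v',v''\}$, and a short case analysis (using that each of them also has $v$ as its unique-from-that-side separator) forces $v' = v''$. This gives a perfect matching $\mu$ on $V(G)$.

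Next I would recover the "$A_\infty$-blocks". Fix an edge $v\mu(v)$ of the matching. The claim is that the maximal set of vertices mutually reachable by iterating "take the partner, then take a vertex that is a unique-separator-witness in the appropriate direction" assembles into a copy of $A_\infty$; concretely I want to find the doubly-infinite labelling. The mechanism: from $\Sym_1(v,v') = \{v,v'\}$ one deduces that $v$ and $v'$ have exactly the same closed neighbourhood outside $\{v,v'\}$; combined with connectivity and twin-freeness one shows there is a vertex $w$ adjacent to exactly one of $v,v'$, and chasing such vertices builds the linear order. I would argue each block $\Phi$ is an induced $A_\infty$ and — crucially — that it is a \emph{module} of $G$: every vertex outside $\Phi$ is adjacent to all of $\Phi$ or to none of it. This modularity is exactly what Proposition~\ref{Psi(G)} used in the reverse direction; here I must \emph{derive} it. The argument: if some external $z$ is adjacent to some but not all of $\Phi$, then the two "ends" of $\Phi$ near the boundary of $z$'s neighbourhood would be separated by $z$, contradicting that the only separator of such a consecutive pair within $\Phi$ is a fixed internal vertex — one has to be careful at the two "infinite ends" and check that $z$ being adjacent to a cofinite or co-cofinite subset still produces a forbidden separation.

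Having the blocks $\{\Phi_i\}$ as modules partitioning $V(G)$, I would form the quotient graph $H$ on the blocks: $\Phi_i \sim \Phi_j$ in $H$ iff (some, equivalently every) vertex of $\Phi_i$ is adjacent to (some, equivalently every) vertex of $\Phi_j$. The matching $\rho$ on $H$ is induced by $\mu$: the two "$X$-side/$Y$-side" sub-modules inside a single $A_\infty$-block correspond to two $H$-vertices joined by $\rho$. One checks $\rho$ is a perfect matching (every block splits canonically into its two halves, which are $\rho$-partners and are $H$-adjacent) and that $H$ is connected (since $G$ is). Finally, unwinding the definition of $\Psi$, reassembling $H$ with $\rho$ and blowing up each vertex to a copy of $\mathbb{Z}$ with $A_\infty$'s along matching edges and complete joins elsewhere reproduces exactly $G$; this last step is a definition-chase once the modular decomposition is in hand. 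I expect the main obstacle to be the middle paragraph: proving that each block is \emph{both} an induced $A_\infty$ \emph{and} a module, carefully handling the two infinite ends where "adjacent to all but finitely many" type configurations could a priori sneak in — the finite lemmas (Lemma~\ref{lemma:twins_of_the_twins}, Proposition~\ref{prop:A2k}) give the right local picture but must be upgraded to the infinite setting by a compactness/limiting argument using that every finite $\{y_1,\dots,y_k,x_1,\dots,x_k\}$-window is a copy of $A_k$.
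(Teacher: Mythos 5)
Your endgame (induced $A_\infty$-blocks, a modularity statement, a quotient graph carrying the matching) is the same as the paper's, but the foundation you build it on does not work. The ``partner'' relation is vacuous: for distinct vertices $v,v'$ the equality $\Sym_1(v,v')=\{v\}$ is impossible, since $v\in\Sym_1(v,v')$ forces $v\not\sim v'$, and then $v'\in B_1(v')\setminus B_1(v)\subseteq \Sym_1(v,v')$ as well. Your fallback $\Sym_1(v,v')=\{v,v'\}$ describes two non-adjacent vertices with equal open neighbourhoods; no such pair occurs in $A_\infty$, so this is not the structure you need to detect (and it contradicts your own assertion that $v\sim v'$). The correct consequence of ``no proper subset separates'' is that for every $v$ there is a pair $\{p,q\}$ with $v\notin\{p,q\}$ and $\Sym_1(p,q)=\{v\}$, i.e.\ $p$ and $q$ are twins in $G-v$: the unique separator of a pair is never a member of that pair. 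In particular there is no canonical perfect matching $\mu$ on $V(G)$ --- in $A_\infty$ each $x_i$ lies in two uniquely-separated pairs, $\{x_{i-1},x_i\}$ and $\{x_i,x_{i+1}\}$, so your uniqueness-of-partner step fails; the matching of the theorem lives on the quotient (pairing the $X$-half with the $Y$-half of each block), not on $V(G)$. The correct engine is Lemma~\ref{lemma:twins_of_the_twins}: starting from twins $y_1,y_2$ of $G-x_1$, deleting either $y_i$ yields a new pair of twins one of which must be $x_1$, and iterating in both directions builds the doubly infinite chain inducing $A_\infty$ by a plain induction --- no compactness or limiting argument at the ``infinite ends'' is needed.

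Your modularity claim is also too strong, and the ``some iff every'' equivalence you use to define $H$ on whole blocks is false. It is each half $\Phi(x)$ (the $X$-part or the $Y$-part of a block), not the whole $A_\infty$-block, that is a module: in the example of Figure~\ref{ConstructionOfPsi}, a vertex of $Y_2$ is adjacent to all of $Y_1$ and to none of $X_1$, hence to some but not all of the block on $X_1\cup Y_1$. The correct derivation is short and needs no case analysis at the ends: by construction $x_i$ and $x_{i+1}$ are twins in $G-y_{i+1}$, so every vertex outside the block sees both or neither, and transitivity along the chain makes all of $X$ pairwise twins in $G-Y$; hence $X$ is a module, and similarly $Y$. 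With $H$ taken on the halves and $\rho$ pairing the two halves of each block, your final reassembly paragraph is then the routine definition-chase you describe, but as written the proposal does not reach that point.
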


\begin{proof}
We already have seen that if $G\cong \Psi(H,\rho)$, then the only identifying code of $G$
is $V(G)$. To prove the converse suppose $G-v$ has a pair of twin vertices for every
vertex $v$ of $G$. It is enough to show that every vertex $v$ of $G$ belongs
to a unique induced subgraph $A_v$ of $G$ isomorphic to $A_{\infty}$ and that if
a vertex not in $A_v$ is adjacent to a vertex in the $X$ (respectively, $Y$) part of $A_v$
then it is adjacent to all the vertices of the $X$ (respectively, $Y$).

Let $x_1$ be a vertex of $G$. The subgraph $G-x_1$ has a pair of twins,
let $y_1$ and $y_2$ be one such pair. Assume, without loss of generality, that
$x_1$ is adjacent to $y_2$ and not to $y_1$. By Lemma~\ref{lemma:twins_of_the_twins},
$x_1$ must be one of the vertices of a pair of twins in $G-y_1$.
Let the other be $x_2$. Now consider the subgraph $G-y_1$.
This subgraph must have a pair of twins and $x_1$ must be one of them. Let $x_0$ be the
other one.

Continuing this process in both directions (with negative and positive indices)
we build our $A_{x_1}\cong A_{\infty}$ as a subgraph of $G$.
Since each consecutive pair of vertices in $X\subset A_{x_1}$ is separated only by a vertex
in $Y\subset A_{x_1}$, every pair of vertices in $X$ are twins in $G-Y$. Thus each vertex not in $A_{x_1}$, either is adjacent to all the vertices in $X$ or to none of them. Similarly, every vertex in $A_{x_1}$, either is adjacent to all the vertices in $Y$ or to none.
Hence $A_{x_1}$ is unique. This proves the theorem.
\end{proof}

\section{Bounding $\gamma^{\text{\tiny {ID}}}(G)$ by $n$ and $\Delta$ }\label{BoundBynAndDelta}

In this section, we introduce new upper bounds on parameter
$\gamma^{\text{\tiny {ID}}}$ in terms of both the order and the maximum
degree of graph, thus extending a result of~\cite{GM07}.

We define $A^+_{\infty}$ to be the subgraph of $A_\infty$ induced by the vertices
of positive indices in $X$ and in $Y$.
The following lemma, which is a strengthening of
Theorem~\ref{thm:existence}, has been attributed to N.~Bertrand~\cite{B01}. We give an independent proof as~\cite{B01} is not accessible.

\begin{lemma}[\cite{B01}]\label{lemma:takeout_vertex}
  If $G$ is a twin-free graph (infinite or not) not containing $A^+_{\infty}$ as an induced subgraph,
  then for every vertex $x$ of $G$, there is a vertex $y \in B_1(x)$ such that $G - y$ is twin-free.
\end{lemma}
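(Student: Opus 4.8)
The plan is to fix a vertex $x$ and suppose, for contradiction, that for every $y\in B_1(x)$ the graph $G-y$ has a pair of twins. First I would record the constraint coming from Lemma~\ref{lemma:twins_of_the_twins}: if $G-y$ has twins $z,t$, then since $G$ itself is twin-free, the missing vertex $y$ must be essential to separating many pairs; in particular, iterating as in the proof of Theorem~\ref{ClassifiactionInfinite}, each such ``bad'' vertex $y$ forces a long chain of twin pairs to appear around $x$. The intended picture is that the failure of the conclusion at $x$ propagates, building an induced $A^+_\infty$ with $x$ playing the role of one of the first vertices, contradicting the hypothesis that $G$ omits $A^+_\infty$.

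Concretely, I would argue as follows. Start with $x=x_1$. By assumption $G-x_1$ has twins, say $y_1,y_2$, and since $x_1$ separates them it is adjacent to exactly one, say $y_2$, and not $y_1$; note $y_2\in B_1(x_1)$. Now apply the hypothesis to $y_2$: $G-y_2$ has twins, and by Lemma~\ref{lemma:twins_of_the_twins} (with $v=x_1$, the pair $x$ after relabelling) one of these twins is forced. The key step is to show that this process can be run \emph{forward}, producing vertices $x_1,x_2,x_3,\dots$ and $y_1,y_2,y_3,\dots$ whose adjacencies are exactly those of $A^+_\infty$: $x_ix_j$ and $y_iy_j$ are always edges, and $x_iy_j$ is an edge iff $i<j$. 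The reason the pattern is forced is that consecutive vertices $x_i,x_{i+1}$ can only be separated (in the part of $G$ built so far) by $y_{i+1}$, and likewise $y_i,y_{i+1}$ only by $x_i$, so each new twin pair demanded by the hypothesis must be created by deleting exactly the next separating vertex, which pins down all adjacencies of the newly introduced vertex. This mirrors the construction in the proof of Theorem~\ref{ClassifiactionInfinite}, but restricted to positive indices: we only ever delete vertices of the form $y_{i+1}$ (which lie in $B_1$ of previously constructed $x_i$'s), never needing to extend in the negative direction.

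The main obstacle I expect is bookkeeping the induction carefully enough to guarantee that at each stage the vertex whose deletion we invoke the hypothesis for is genuinely in $B_1(x)$ for the \emph{original} $x$, and that the twins it produces are new vertices (not ones already placed) with exactly the claimed adjacency pattern to all earlier vertices — i.e. ruling out ``short-circuits'' where a purported new vertex coincides with an old one or attaches incorrectly. Lemma~\ref{lemma:twins_of_the_twins} is the tool that controls which vertex of each new twin pair is forced, and twin-freeness of $G$ together with the already-established adjacencies forces the rest; one has to check there is no consistent alternative completion of the adjacency pattern. Once the induction goes through, the subgraph induced by $\{x_1,x_2,\dots\}\cup\{y_1,y_2,\dots\}$ is isomorphic to $A^+_\infty$, contradicting the hypothesis, so some $y\in B_1(x)$ must have $G-y$ twin-free.
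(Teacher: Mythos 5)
Your proposal is essentially the paper's own proof: assume the conclusion fails at $x=x_1$, and use Lemma~\ref{lemma:twins_of_the_twins} together with the hypothesis applied to successive vertices of $B_1(x_1)$ to grow an induced copy of $A^+_{\infty}$, the only difference from the construction in Theorem~\ref{ClassifiactionInfinite} being that the process runs in one direction only. One small correction to your sketch: the construction alternates deletions --- removing $x_k$ forces the new twin pair $y_k,y_{k+1}$ and removing $y_{k+1}$ forces $x_k,x_{k+1}$ --- so both $x$-type and $y$-type vertices get deleted (all of them lying in $B_1(x_1)$, which is what lets the hypothesis apply), not only the $y_{i+1}$'s as your parenthetical suggests.
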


\begin{proof}
 By contradiction, suppose that $x_1$ is a vertex that fails the statement of the lemma.
Then $G- x_1$ has a pair of twin vertices. We name them $y_1$ and $y_2$.
Without loss of generality we assume that $x_1$ is adjacent to
$y_2$ but not to $y_1$. Now, in $G- y_2 $ we must have another pair $u, u'$ of twin vertices.
By Lemma~\ref{lemma:twins_of_the_twins}, $x_1 \in \{u,u'\}$, we name the other
element $x_2$ ($x_2 \in B_1(x_1)$). Note that the subgraph induced on
$x_1,x_2,y_1,y_2$ is isomorphic to $A_2$. We prove 
by induction that $A^+_{\infty}$ is an induced subgraph of $G$, thus obtaining a contradiction.

To this end suppose $A_k$ on $\{y_1,\ldots y_k,x_1,\ldots, x_k \}$ is already built such that
$x_{k-1}, x_k$ are twins in $G-y_k$ and $y_{k-1}, y_k$ are twins in $G- x_{k-1}$.
Then $x_k \in B_1(x_1)$. Consider $G-x_k$. There must be a pair of twins and,
by Lemma~\ref{lemma:twins_of_the_twins}, $y_k$ must be one of them. 
Let $y_{k+1}$ be the other one. Since $y_k$ and $y_{k+1}$ are twins in $G-x_k$,
 then $y_{k+1}$ is adjacent to $x_1, \ldots, x_k$ and $y_1, \ldots, y_k$, in particular
$y_{k+1}\in B_1(x_1)$. Now, there
must be a pair of twins in $G-y_{k+1}$ and again by Lemma~\ref{lemma:twins_of_the_twins} one of them must be 
$x_k$, let the other one be $x_{k+1}$. Since $x_k$ and $x_{k+1}$ are twins in $G-y_{k+1}$, then
$x_{k+1}$ is adjacent to $x_1, \ldots ,x_k$ and not adjacent to $y_1,\ldots, y_k$.
Thus the graph induced on $\{y_1,\ldots ,y_{k+1},x_1,\ldots, x_{k+1} \}$ is isomorphic
to $A_{k+1}$ with the property that $x_k$,$x_{k+1}$ are twins in $G-y_{k+1}$ and $y_k$,$y_{k+1}$ are twins in  $G-x_k$. 
Since this process does not end, we find that $A^+_{\infty}$ is an induced subgraph of $G$.
\end{proof}

It was conjectured in~\cite{F09} that:

\begin{conjecture}[\cite{F09}]
For every connected twin-free graph $G$ of maximum degree $\Delta\ge
3$, we have $\gamma^{\text{\tiny{ID}}}(G)\leq\Big\lceil|V(G)|-\frac{|V(G)|}{\Delta(G)}\Big\rceil$.
\end{conjecture}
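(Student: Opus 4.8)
The plan is to recast the bound as a vertex-deletion statement and feed it the two facts just established. Since
$\gamma^{\text{\tiny {ID}}}(G)=|V(G)|-\max\{\,|I|\ :\ V(G)\setminus I\text{ is an identifying code}\,\}$
and $\big\lceil n-\tfrac{n}{\Delta}\big\rceil=n-\big\lfloor\tfrac{n}{\Delta}\big\rfloor$, the conjecture is equivalent to producing a set $I\subseteq V(G)$ with $|I|\ge\lfloor n/\Delta\rfloor$ whose complement is an identifying code. Two observations supply candidate sets $I$. First, being finite, $G$ contains no induced $A^+_{\infty}$, so Lemma~\ref{lemma:takeout_vertex} applies at \emph{every} vertex; consequently the set $R=\{\,v\in V(G)\ :\ G-v\text{ is twin-free}\,\}$ meets every closed neighbourhood $B_1(x)$, so $R$ is a dominating set and $|R|\ge n/(\Delta+1)$. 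Second, for a connected twin-free $G$ and any $v\in R$ the set $V(G)\setminus\{v\}$ is itself an identifying code: it dominates because $G$ is connected, it separates every pair avoiding $v$ because $G-v$ is twin-free, and it separates $v$ from every other vertex because $G$ is twin-free (were $v$ not separated from some $u$, one checks $u$ would be a twin of $v$). Hence each vertex of $R$ is individually removable, which is exactly the hypothesis under which Lemma~\ref{4-independent} applies to $R$.

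It then remains to extract from $R$ a $4$-independent set $I$ that is as large as possible: Lemma~\ref{4-independent} guarantees that $V(G)\setminus I$ is an identifying code, so $\gamma^{\text{\tiny {ID}}}(G)\le n-|I|$. The crude realisation is a greedy packing: repeatedly move any surviving vertex of $R$ into $I$ and discard everything within distance $3$ of it. As $|B_3(v)|\le 1+\Delta+\Delta(\Delta-1)+\Delta(\Delta-1)^2\le\Delta^3$ for $\Delta\ge 2$, this produces $I$ with $|I|\ge|R|/\Delta^3\ge n/\big((\Delta+1)\Delta^3\big)$, hence a bound of the shape $\gamma^{\text{\tiny {ID}}}(G)\le n-\tfrac{n}{(\Delta+1)\Delta^3}$, which is already of the conjectured form and improves the $n-1$ bound of~\cite{GM07}.

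The main obstacle is to bring the denominator down from a polynomial in $\Delta$ to $\Delta$ itself. Both losses above are intrinsic to this route: even if $R=V(G)$, a maximum $4$-independent set is an independent set of $G^3$ and can have size only $\Theta(n/\Delta^3)$, so Lemma~\ref{4-independent} on its own cannot reach $n/\Delta$. Matching the exact bound appears to need (i) deleting vertices one at a time in a well-chosen order, re-establishing twin-freeness after each deletion in the spirit of Proposition~\ref{GeneralizedGoodProposition} and of the inductive construction in Lemma~\ref{lemma:takeout_vertex}, so that the domination factor $\Delta+1$ is not paid; and, more seriously, (ii) discarding $4$-independence in favour of the exact criterion ``$V(G)\setminus I$ is an identifying code if and only if every vertex of $I$ has a neighbour outside $I$ and no symmetric difference $B_1(x)\Sym B_1(y)$ with $x\neq y$ is contained in $I$''. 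The heart of the matter is then to exploit that for \emph{most} pairs $x,y$ one has $|B_1(x)\Sym B_1(y)|\ge 2$, so a vertex may enter $I$ with slack, while the few rigid pairs---those for which $G-v$ fails to be twin-free---are precisely the ones controlled by Lemma~\ref{lemma:takeout_vertex}. I would try to make this quantitative either by a local charging argument (assign each vertex $v$ weight $1/\deg(v)\ge 1/\Delta$ and pack vertex-disjoint ``removable gadgets'', each absorbing total weight at least $1$) or by a randomised construction (put each vertex in $I$ independently with a suitably tuned probability, then delete vertices to repair the domination and separation violations), reducing first to the $\Delta$-regular triangle-free case where the counting is cleanest. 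I expect essentially all the difficulty to lie in (ii).
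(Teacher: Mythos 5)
The statement you are proving is stated in the paper only as a conjecture (attributed to~\cite{F09}); the paper does not prove it, and neither do you. What the paper offers ``in support'' of it is Theorem~\ref{BoundingByDelta^5}, namely $\gamma^{\text{\tiny{ID}}}(G)\le n-\frac{n}{\Theta(\Delta^5)}$. The rigorous part of your proposal lands in exactly the same place, and it is correct: the reduction to finding a large $I$ with $V(G)\setminus I$ an identifying code, the observation that the set $R=\{v:G-v\text{ twin-free}\}$ is dominating by Lemma~\ref{lemma:takeout_vertex} (a finite graph has no induced $A^+_\infty$) and hence $|R|\ge n/(\Delta+1)$, the verification that $V(G)\setminus\{v\}$ is an identifying code for each $v\in R$, and the greedy extraction of a $4$-independent $I\subseteq R$ with $|I|\ge|R|/|B_3|\ge|R|/\Delta^3$ feeding into Lemma~\ref{4-independent} are all sound. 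This gives $\gamma^{\text{\tiny{ID}}}(G)\le n-\frac{n}{(\Delta+1)\Delta^3}$, which is in fact marginally \emph{sharper} than the paper's supporting theorem: the paper packs a maximal $6$-independent set directly in $G$ (paying $|B_5|=\Theta(\Delta^5)$) and maps it into $R$ via Lemma~\ref{lemma:takeout_vertex}, whereas you pay $\Delta+1$ for domination of $R$ and only $|B_3|=\Theta(\Delta^3)$ for the packing inside $R$. That variation is worth keeping.

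The gap is that the conjectured bound $\lceil n-n/\Delta\rceil$ is never reached, and your items (i) and (ii) are research programmes, not arguments. You correctly diagnose the obstruction yourself: any route through Lemma~\ref{4-independent} caps $|I|$ at the independence number of $G^3$, which is $\Theta(n/\Delta^3)$ even when $R=V(G)$, so no refinement of the packing can close the gap; and the proposed replacements (the exact separation criterion on symmetric differences, the charging argument, the randomised construction) are stated without any proof that the ``rigid pairs'' can be controlled with only a factor $\Delta$ of loss. As it stands the proposal is an honest and essentially correct proof of a weaker theorem of the same shape as Theorem~\ref{BoundingByDelta^5}, not a proof of the conjecture, which remains open in the paper as well.
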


In support of this conjecture, we prove the following weaker upper
bound on the size of a minimum identifying code of a twin-free
graph. We note that a similar bound is proved in~\cite{F09}.

\begin{theorem}\label{BoundingByDelta^5}
Let $G$ be a connected, twin-free graph on $n$ vertices and
of maximum degree $\Delta$. Then 
$\gamma^{\text{\tiny {ID}} }(G)\leq 
n(1-\frac{\Delta-2}{\Delta(\Delta -1)^5-2})= n- \frac{n}{\Theta(\Delta^5)}$.
\end{theorem}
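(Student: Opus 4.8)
The plan is to build the identifying code greedily by a domination-type argument, exploiting Lemma~\ref{lemma:takeout_vertex} to control where twins can appear. First I would observe that since $G$ is twin-free and connected on $n \ge 3$ vertices, if $G$ contained $A^+_\infty$ as an induced subgraph it would be infinite; so for finite $G$ we may always invoke Lemma~\ref{lemma:takeout_vertex}: around every vertex $x$ there is $y \in B_1(x)$ with $G-y$ twin-free. The idea is to pick a maximal set $D \subseteq V(G)$ that is ``well-spread'' — concretely, a maximal $6$-independent set (so that balls of radius roughly $2$ or $3$ around its members are disjoint and, crucially, far enough apart that deleting one member's private neighbourhood cannot create twins involving another member). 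By maximality, $D$ is a $5$-dominating set, hence $|D|$ is not too small; more to the point, I want $D$ to be \emph{large}, so instead I would take $D$ to be a maximal $6$-independent set and note that greedily, around each vertex of $G$ there is such a spread-out vertex, giving $|D| \ge n / (1 + \Delta + \Delta(\Delta-1) + \cdots + \Delta(\Delta-1)^4) = n/(\Delta(\Delta-1)^5 - 2)\cdot(\Delta-2)$ after simplifying the geometric sum $\sum_{i=0}^{5}\Delta(\Delta-1)^{i-1}$ suitably — this is exactly the denominator appearing in the statement, so the bookkeeping of the ball sizes is what pins down the constant $\Theta(\Delta^5)$.

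The second step is to show that $C := V(G) \setminus D$ (or $V(G)$ minus a suitable subset of $D$) is an identifying code, so that $\gamma^{\text{\tiny ID}}(G) \le n - |D| \le n(1 - \frac{\Delta-2}{\Delta(\Delta-1)^5-2})$. For this I would mimic the proof of Lemma~\ref{4-independent}: $C$ is dominating because $D$ is independent (each $d \in D$ has a neighbour, which lies in $C$); and for separation, any two vertices $u,v$ not both in $D$ have one of them, say $u$, in $C$, and if $C$ fails to separate them they must be adjacent with their only ``missing'' separator lying in $D$ — but by the $6$-independence of $D$ together with Lemma~\ref{lemma:takeout_vertex} applied in $G$, one shows that for each $d \in D$, $V(G)\setminus\{d\}$ is twin-free; the large pairwise distance then forces that at most one vertex of $D$ can be responsible, and $V(G)\setminus\{d\}$ being an identifying code (not merely twin-free — here I would instead pass through Proposition~\ref{GeneralizedGoodProposition} with $S=\{d\}$, or redo the local analysis) gives the contradiction. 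Pairs with both vertices in $D$ are separated for free by the distance condition, exactly as in Lemma~\ref{4-independent}.

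The main obstacle I anticipate is the passage from ``$G-y$ twin-free'' (what Lemma~\ref{lemma:takeout_vertex} delivers) to ``$V(G)\setminus D$ separates everything'': twin-freeness of $G-d$ only tells us $d$ is the unique un-separated partner of at most one vertex, but to run the Lemma~\ref{4-independent}-style argument cleanly I need that no \emph{other} deleted vertex interferes, which is where the independence parameter has to be taken large (distance $\ge 5$ or $6$ rather than $\ge 4$) and where the exponent $5$ in $\Delta^5$ comes from — deleting the radius-$2$ closed neighbourhood structure around each $d$ can, a priori, create twins two or three steps away, and ruling this out via Lemma~\ref{lemma:twins_of_the_twins} iterated a bounded number of times is the delicate part. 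Once the right independence distance is fixed, the counting of $|D|$ via disjoint balls of that radius is routine, and matching it to the stated denominator $\Delta(\Delta-1)^5 - 2$ is just summing a finite geometric series; I would carry out that identification last, after the structural part is secured.
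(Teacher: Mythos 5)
Your skeleton --- take a maximal $6$-independent set $D$, bound $|D|$ from below by covering $V(G)$ with balls of radius $5$ (the geometric sum does simplify to $\frac{\Delta(\Delta-1)^5-2}{\Delta-2}$), then delete a $4$-independent set and invoke Lemma~\ref{4-independent} --- is exactly the paper's. But there is a genuine gap at the step you yourself flag as the main obstacle, and the patch you propose does not work. You assert that ``by the $6$-independence of $D$ together with Lemma~\ref{lemma:takeout_vertex}, for each $d\in D$ the set $V(G)\setminus\{d\}$ is twin-free.'' Lemma~\ref{lemma:takeout_vertex} does not say that: it only guarantees \emph{some} $y\in B_1(d)$ with $G-y$ twin-free, and no independence condition on $D$ can upgrade this to $y=d$. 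Concretely, in a long path $v_1v_2\cdots v_n$ the set $D=\{v_3,v_9,v_{15},\dots\}$ is a maximal $6$-independent set, yet $B_1(v_1)\Sym B_1(v_2)=\{v_3\}$, so $V(G)\setminus D$ fails to separate $v_1$ from $v_2$; your construction $C=V(G)\setminus D$ is therefore false as literally stated. Your fallback via Proposition~\ref{GeneralizedGoodProposition} with $S=\{d\}$ does not help: that proposition requires $G-S$ to be twin-free (the very thing in doubt), and it only bounds $\gamma^{\text{\tiny {ID}} }(G)$ by $\gamma^{\text{\tiny {ID}} }(G-S)+\lvert S\rvert$, which is the wrong direction for showing that $V(G)\setminus S$ is itself a code.

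The missing idea is a one-line substitution: do not delete $D$ itself, but for each $d\in D$ delete the vertex $f(d)\in B_1(d)$ supplied by Lemma~\ref{lemma:takeout_vertex}. Since each $f(d)$ lies within distance $1$ of $d$, the set $f(D)$ is still $4$-independent and $\lvert f(D)\rvert=\lvert D\rvert$; this is precisely why $D$ is taken $6$-independent rather than $4$-independent, and it --- together with the fact that a maximal $6$-independent set is $5$-dominating --- is the real source of the exponent $5$. For $y=f(d)$, twin-freeness of $G-y$ plus twin-freeness and connectivity of $G$ give that $V(G)\setminus\{y\}$ is an identifying code of $G$ (the only pairs left to check involve $y$ itself, and $B_1(y)\Sym B_1(v)=\{y\}$ is impossible since it would force $v\in N(y)$ while $y\notin B_1(v)$), so Lemma~\ref{4-independent} applies verbatim to $f(D)$. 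No iteration of Lemma~\ref{lemma:twins_of_the_twins} and no analysis of ``deleting radius-$2$ neighbourhood structure'' is needed; that part of your discussion is a red herring.
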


\begin{proof}
  First, we note that if $I$ is a maximal 6-independent set, then
  $|I|\geq \frac{n(\Delta-2)}{\Delta(\Delta-1)^5-2}$. This is true
  because $|B_5(x)|\leq \frac{\Delta(\Delta-1)^5-2}{\Delta-2}$ for
  every vertex $x$. Now, let $I$ be a 6-independent set. For each
  vertex $x\in I$ let $f(x)$ be the vertex found using
  Lemma~\ref{lemma:takeout_vertex} and $f(I)=\{f(x)~|~x \in I\}$.
  Since $I$ is a 6-independent set, $f(I)$ is a 4-independent set of
  $G$ and $|f(I)|=|I|$.  Now, by Lemma~\ref{4-independent}, we know
  that $C=V(G)\setminus f(I)$ is an identifying code of $G$. The bound
  is now obtained by taking any maximal 6-independent set $I$.
\end{proof}

It is easy to observe that if $G$ is a regular twin-free graph, then $V(G)-x$ is an identifying code for every vertex $x$ of $G$. Thus the result of theorem~\ref{BoundingByDelta^5} can be slightly improved for regular graphs as follows:

\begin{theorem}
Let $G$ be a connected $\Delta$-regular twin-free graph on $n$ vertices. 

Then 
$\gamma^{\text{\tiny {ID}} }(G)\leq 
n(1-\frac{1}{1+\Delta-\Delta^2+\Delta^3})= n- \frac{n}{\Theta(\Delta^3)}$.
\end{theorem}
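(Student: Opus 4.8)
The plan is to mimic the proof of Theorem~\ref{BoundingByDelta^5}, but exploit the observation stated just before the theorem: when $G$ is $\Delta$-regular and twin-free, $V(G)\setminus\{x\}$ is an identifying code for \emph{every} vertex $x$. This lets us skip the step that invokes Lemma~\ref{lemma:takeout_vertex} to replace each $x$ by a neighbour $f(x)$, and instead remove the members of the independent set directly. Concretely, I would first justify the claimed regular-graph fact: if $G$ is $\Delta$-regular and twin-free, then for any $x$, the set $C=V(G)\setminus\{x\}$ dominates $G$ (every vertex still has a neighbour in $C$ since $\Delta\ge 1$ and $G$ is twin-free hence has no isolated vertices), and it separates every pair: a pair $u,w$ not separated by $C$ would have to differ only possibly at $x$, and regularity forces $B_1(u)=B_1(w)$, contradicting twin-freeness. (One has to handle $n$ small / $\Delta=1,2$ separately, but those are trivial or covered by earlier results.)

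Next I would replace the $6$-independent set by a $4$-independent set: since each vertex $x$ in the independent set $I$ will itself be removed (no detour through a neighbour), it suffices that the removed set $I$ be $4$-independent so that Lemma~\ref{4-independent} applies directly — the hypothesis of that lemma, that $V(G)\setminus\{x\}$ is an identifying code for each $x\in I$, is exactly the regular-graph fact just proved. Thus for any $4$-independent set $I$, the set $C=V(G)\setminus I$ is an identifying code, giving $\gamma^{\text{\tiny {ID}}}(G)\le n-|I|$.

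Then I would pick $I$ to be a \emph{maximal} $4$-independent set and lower-bound $|I|$ by a volume/greedy argument: maximality means every vertex lies within distance $3$ of some vertex of $I$, so the balls $B_3(x)$, $x\in I$, cover $V(G)$. In a $\Delta$-regular graph $|B_3(x)|\le 1+\Delta+\Delta(\Delta-1)+\Delta(\Delta-1)^2 = 1+\Delta+\Delta^2-\Delta+\Delta^3-2\Delta^2+\Delta = 1+\Delta-\Delta^2+\Delta^3$, which matches the denominator in the statement. Hence $|I|\ge n/(1+\Delta-\Delta^2+\Delta^3)$, and $\gamma^{\text{\tiny {ID}}}(G)\le n\bigl(1-\frac{1}{1+\Delta-\Delta^2+\Delta^3}\bigr)=n-\frac{n}{\Theta(\Delta^3)}$, as desired.

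The only genuinely delicate point is the first step — verifying carefully that $V(G)\setminus\{x\}$ is always an identifying code in a $\Delta$-regular twin-free graph, in particular the separation claim, since one must check that regularity really does upgrade "$u,w$ agree on all of $V(G)\setminus\{x\}$" to "$u,w$ are twins". The rest is the same template as Theorem~\ref{BoundingByDelta^5} with the constant $5$ downgraded to $3$ and the $6$-independence downgraded to $4$-independence, plus a routine ball-size computation; I would also remark that small cases ($\Delta\le 2$, or $G\cong K_{\Delta+1}$ type exceptions) should be checked or excluded so that the bound is not vacuous.
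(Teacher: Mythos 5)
Your proposal is correct and follows essentially the same route as the paper's (very terse) proof: take a maximal $4$-independent set $I$ of size at least $n/(1+\Delta-\Delta^2+\Delta^3)$ via the ball-volume bound $|B_3(x)|\le 1+\Delta-\Delta^2+\Delta^3$, observe that regularity plus twin-freeness makes $V(G)\setminus\{x\}$ an identifying code for every $x$, and apply Lemma~\ref{4-independent} to $I$ directly. Your extra care in verifying the separation claim for regular graphs (a symmetric difference contained in $\{x\}$ forces $B_1(u)=B_1(w)$ by equal cardinalities) is exactly the justification the paper leaves implicit.
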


\begin{proof}
We note that a $4$-independent set $I$ of size at least 
$\frac{n}{1+\Delta-\Delta^2+\Delta^3}$ can be found because $|B_3(x)|\leq \frac{\Delta(\Delta-1)^3-2}{\Delta-2}=1+\Delta-\Delta^2+\Delta^3$.
Now, $G-x$ is twin-free for every vertex $x$ of $I$ (because $G$ is regular), so by Lemma~\ref{4-independent}, $V(G)-I$ is an identifying code of $G$.
\end{proof}

It is proved in~\cite{GM07} that in any nontrivial infinite twin-free
graph $G$ whose vertices are all of finite degree, there exists a vertex $x$
such that $V(G)\setminus \{x\}$ is an identifying code of $G$. Using
Lemma~\ref{lemma:takeout_vertex} and similar to the proof of
Theorem~\ref{BoundingByDelta^5}, we can strengthen their result as
follows:

\begin{theorem}\label{InfiniteofInfinite}
  Let $G$ be a connected infinite twin-free graph whose vertices all
  have finite degree. Then there exists an infinite set of vertices
  $I\subseteq V(G)$, such that $V(G)\setminus I$ is an identifying code of $G$.
\end{theorem}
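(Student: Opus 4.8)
The plan is to mimic the proof of Theorem~\ref{BoundingByDelta^5}, but instead of extracting a maximal $6$-independent set of positive density, we only need an \emph{infinite} $6$-independent set, which is available in any infinite graph of bounded (indeed, merely finite) degree. First I would check that $G$ does not contain $A^+_{\infty}$ as an induced subgraph: in $A^+_{\infty}$ the vertex $x_1$ has infinite degree (it is adjacent to all $y_j$ with $j>1$ and to all other $x_j$), so an induced copy would force a vertex of $G$ to have infinite degree, contradicting the hypothesis. Hence Lemma~\ref{lemma:takeout_vertex} applies to $G$.

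Next I would build the infinite $6$-independent set greedily. Start with any vertex $v_1$; having chosen $v_1,\dots,v_m$, note that $\bigcup_{i=1}^m B_6(v_i)$ is a finite set (each $v_i$ has finite degree, so each ball $B_6(v_i)$ is finite since $G$ is locally finite), while $V(G)$ is infinite; choose $v_{m+1}$ outside this finite set. This produces an infinite set $I=\{v_1,v_2,\dots\}$ with pairwise distances at least $6$ — here I would use connectedness only to guarantee $G$ is infinite in the relevant sense, but really local finiteness plus $|V(G)|=\infty$ suffices. Then, exactly as in the proof of Theorem~\ref{BoundingByDelta^5}, for each $x\in I$ let $f(x)\in B_1(x)$ be the vertex given by Lemma~\ref{lemma:takeout_vertex} so that $G-f(x)$ is twin-free, and set $f(I)=\{f(x)\mid x\in I\}$. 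Since $d(x,x')\ge 6$ for distinct $x,x'\in I$ and $f(x)\in B_1(x)$, we get $d(f(x),f(x'))\ge 4$, so $f(I)$ is a $4$-independent set, and moreover $f$ is injective on $I$ (distinct elements of $I$ have $f$-images at distance $\ge 4>0$), whence $f(I)$ is infinite.

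Finally I would invoke Lemma~\ref{4-independent}: $G$ is connected and twin-free, $f(I)$ is a $4$-independent set, and for each $x\in I$ the set $V(G)\setminus\{f(x)\}$ is an identifying code of $G$ (that is precisely what Lemma~\ref{lemma:takeout_vertex} gives, since $G-f(x)$ is twin-free and $G\not\cong\overline{K_n}$ as $G$ is connected and infinite, so by Theorem~\ref{thm:existence}-type reasoning $V(G)\setminus\{f(x)\}$ identifies $G$). Therefore $C=V(G)\setminus f(I)$ is an identifying code of $G$, and taking $I$ itself as the promised infinite set — or rather $f(I)$, which is infinite — completes the proof. The only point requiring a little care is verifying that Lemma~\ref{4-independent} and Lemma~\ref{lemma:takeout_vertex} are genuinely stated for possibly-infinite graphs: Lemma~\ref{lemma:takeout_vertex} explicitly says "infinite or not", and the proof of Lemma~\ref{4-independent} is purely local and never uses finiteness, so both apply verbatim; this is the step I would double-check most carefully, but I do not expect a real obstacle there.
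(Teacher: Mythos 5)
Your proof is correct and follows exactly the route the paper intends: the paper gives no written proof for Theorem~\ref{InfiniteofInfinite}, saying only that it follows ``using Lemma~\ref{lemma:takeout_vertex} and similar to the proof of Theorem~\ref{BoundingByDelta^5}'', and your argument (exclude $A^+_{\infty}$ via finite degrees, greedily extract an infinite $6$-independent set by local finiteness, push it to an infinite $4$-independent set $f(I)$, and apply Lemma~\ref{4-independent}) is precisely that argument filled in. The details you flag for care --- injectivity of $f$ on $I$ and the applicability of the two lemmas to infinite graphs --- are all handled correctly.
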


\section{General $r$-identifying codes}\label{r-identifyingCodes}

To identify the class of graphs with $\gamma^{\text{\tiny {ID}}
}_r(G)=n-1$ one needs to find the $r$-roots of the graphs in
$\{K_{1,t}~|~t\geq 2\}\}\cup \mathcal A \cup (\mathcal A \join K_1)$.  The
general problem of finding the $r$-root of a graph $H$ is an NP-hard
problem~\cite{MS94} and it does not seem to be an easy task in this
particular case either.

If $s$ divides $k-1$ and $r=\frac{k-1}{s}$, then the graph $G=P_{2k}^s$ is one of the
$r$-roots of $A_k$. It is easy to see that, in most cases, one can remove many edges
of $G$ and still have $G^r\cong A_k$. The difficulty of the problem is that an $r$-root
of $A_k$ is not necessarily a subgraph of $P_{2k}^s$. An example of such a 2-root of $A_5$
is given in Figure~\ref{powerPath}.

\begin{figure}[!ht]
\centering
\begin{tikzpicture}[join=bevel,inner sep=0.5mm,]

  \node (1) at (90bp,110bp) [draw, circle, fill=black] {};
  \node  at (90bp,100bp) {$1$};
  
  \node (2) at (130bp,110bp) [draw, circle, fill=black] {};
  \node  at (130bp,100bp) {$2$};
  
  \node (3) at (170bp,110bp) [draw, circle, fill=black] {};
  \node  at (170bp,100bp) {$3$};
    
  \node (4) at (210bp,110bp) [draw, circle, fill=black] {};
  \node  at (210bp,100bp) {$4$};

  \node (5) at (250bp,110bp) [draw, circle, fill=black] {};
  \node  at (250bp,100bp) {$5$};
  
  \node (6) at (290bp,110bp) [draw, circle, fill=black] {};
  \node  at (290bp,100bp) {$6$};
  
  \node (7) at (330bp,110bp) [draw, circle, fill=black] {};
  \node  at (330bp,100bp) {$7$};
  
  \node (8) at (370bp,110bp) [draw, circle, fill=black] {};
  \node  at (370bp,100bp) {$8$};
  
  \node (9) at (410bp,110bp) [draw, circle, fill=black] {};
  \node  at (410bp,100bp) {$9$};
  
  \node (10) at (450bp,110bp) [draw, circle, fill=black] {};
  \node  at (450bp,100bp) {$10$};
  
  \draw [-] (1) -- node[above,sloped] {} (2);
  \draw [-] (2) -- node[above,sloped] {} (3);
  \draw [-] (3) -- node[above,sloped] {} (4);
  \draw [-] (4) -- node[above,sloped] {} (5);
  \draw [-] (5) -- node[above,sloped] {} (6);
  \draw [-] (6) -- node[above,sloped] {} (7);
  \draw [-] (7) -- node[above,sloped] {} (8);
  \draw [-] (8) -- node[above,sloped] {} (9);
  \draw [-] (9) -- node[above,sloped] {} (10);
  
\draw [semithick]  
   (90bp,110bp) to [out=40,in=140] (170bp,110bp)
   (130bp,110bp) to [out=40,in=140] (250bp,110bp)
   (170bp,110bp) to [out=30,in=150] (250bp,110bp)
   (210bp,110bp) to [out=40,in=140] (330bp,110bp)
   (290bp,110bp) to [out=40,in=140] (410bp,110bp)
   (290bp,110bp) to [out=30,in=150] (370bp,110bp)
   (370bp,110bp) to [out=40,in=140] (450bp,110bp);

\end{tikzpicture}
\caption{A 2-root of $A_5$ which is not a subgraph of $P_{10}^2$ }
\label{powerPath}
\end{figure}
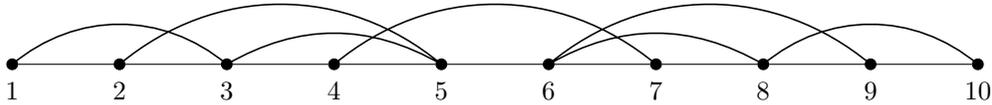

For the case of infinite graphs, we note that there exists a 2-root of
$A_{\infty}$. This graph is defined as follows: it has the same vertex
set $X \cup Y$ as $A_{\infty}$ and the same edges between $X$ and $Y$,
but no edges within $X$ or $Y$. However, we do not know whether there
exist other roots of graphs described in
Theorem~\ref{ClassifiactionInfinite}.

We should also note that a $(3r+1)$-independent set in $G^r$ is a
4-independent set in $G$. Thus we have the following general form of
Lemma~\ref{4-independent}, Theorem~\ref{BoundingByDelta^5} and
Theorem~\ref{InfiniteofInfinite}:

\begin{lemma}
 Let $G$ be a connected graph on $n$ vertices such
 that $G^r$ is twin-free. Let $I$
 be a $(3r+1)$-independent set of $G$ such that for every vertex $v$
 of $I$ the set $V(G)\setminus \{v\}$ is an $r$-identifying code of $G$. Then
 $C=V(G)\setminus I$ is an $r$-identifying code of $G$.
\end{lemma}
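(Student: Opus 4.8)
The plan is to mimic the proof of Lemma~\ref{4-independent} almost verbatim, with the only new ingredient being the observation already stated in the text: a $(3r+1)$-independent set in $G$ is a $4$-independent set in $G^r$. First I would pass to the power graph: by the remark preceding the lemma, $C$ is an $r$-identifying code of $G$ if and only if $C$ is an identifying code of $G^r$, and $G^r$ is twin-free by hypothesis. Likewise, $V(G)\setminus\{v\}$ being an $r$-identifying code of $G$ means exactly that $V(G^r)\setminus\{v\}$ is an identifying code of $G^r$. So the whole statement translates into: $I$ is a $4$-independent set of $G^r$ such that for each $v\in I$, $V(G^r)\setminus\{v\}$ identifies $G^r$, hence $V(G^r)\setminus I$ identifies $G^r$.

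The key point justifying the translation of "$(3r+1)$-independent in $G$" to "$4$-independent in $G^r$" is that distances in $G^r$ are $d_{G^r}(x,y)=\lceil d_G(x,y)/r\rceil$: if $d_G(x,y)\ge 3r+1$ then $d_{G^r}(x,y)\ge \lceil (3r+1)/r\rceil = 4$, so distinct vertices of $I$ are at distance at least $4$ in $G^r$. One subtlety to check: Lemma~\ref{4-independent} is stated for \emph{connected} twin-free graphs, so I would need $G^r$ connected; this follows immediately from $G$ being connected (any walk in $G$ gives a walk in $G^r$). With connectivity and twin-freeness of $G^r$ in hand, and with $I$ verified to be a $4$-independent set of $G^r$ all of whose single-vertex complements are identifying codes, Lemma~\ref{4-independent} applied to $G^r$ yields that $V(G^r)\setminus I = V(G)\setminus I$ is an identifying code of $G^r$, i.e.\ an $r$-identifying code of $G$.

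There is essentially no obstacle here: the lemma is a pure "transport along powers" corollary, and the only thing that requires a line of care is confirming the distance formula for powers and hence that $4$-independence is inherited. For completeness, one might alternatively give the direct argument without invoking $G^r$: $C$ clearly $r$-dominates $G$ since each removed vertex $v\in I$ is $r$-dominated by any of its $G$-neighbours-within-$r$ still in $C$; and for a pair $x,y$ not $r$-separated by $C$, one is in $C$, so they must be at $G$-distance at most... the same merging argument as in Lemma~\ref{4-independent} shows $B_r(x)\Sym B_r(y)$ lies inside a single vertex $u\in I$, contradicting that $V(G)\setminus\{u\}$ is an $r$-identifying code. But the cleanest writeup is simply: apply Lemma~\ref{4-independent} to $G^r$, using that a $(3r+1)$-independent set of $G$ is a $4$-independent set of $G^r$.
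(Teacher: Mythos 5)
Your proof is correct and matches the paper's intent exactly: the paper gives no explicit proof but the remark immediately preceding the lemma (``a $(3r+1)$-independent set \ldots is a $4$-independent set \ldots'', where the roles of $G$ and $G^r$ appear to be accidentally swapped in the paper's phrasing) signals precisely this reduction to Lemma~\ref{4-independent} applied to $G^r$. Your verification of the distance inequality $d_{G^r}(x,y)\ge\lceil(3r+1)/r\rceil=4$ and of the connectivity and twin-freeness hypotheses for $G^r$ is the right level of care.
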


\begin{theorem}
 Let $G$ be a connected graph on $n$ vertices and
of maximum degree $\Delta$ such that $G^r$ is twin-free. Then 
$\gamma_r^{\text{\tiny {ID}} }(G)\leq 
n(1- \frac{\Delta-2}{\Delta(\Delta -1)^{5r}-2}) = n- \frac{n}{\Theta(\Delta^{5r})}$.
\end{theorem}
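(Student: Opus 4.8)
The plan is to lift the proof of Theorem~\ref{BoundingByDelta^5} to the graph $G^r$, using throughout that a set $C\subseteq V(G)$ is an $r$-identifying code of $G$ if and only if it is an identifying code of $G^r$. The three ingredients of that proof all have $r$-versions ready to hand: a ball-counting estimate, now for balls of radius $5r$; Lemma~\ref{lemma:takeout_vertex} applied to $G^r$; and the $r$-analogue of Lemma~\ref{4-independent} stated just above the theorem, which requires a $(3r+1)$-independent set. The counting budget works out exactly as in the case $r=1$ (where $5r+1=6$, $3r+1=4$ and the relevant balls have radius $5$).

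Concretely, I would first take $I$ to be a maximal $(5r+1)$-independent set of $G$. By maximality every vertex of $G$ lies within distance $5r$ of some $x\in I$, so $n\le\sum_{x\in I}|B_{5r}(x)|$; since $|B_{5r}(x)|\le\frac{\Delta(\Delta-1)^{5r}-2}{\Delta-2}$ for every vertex, this yields $|I|\ge\frac{n(\Delta-2)}{\Delta(\Delta-1)^{5r}-2}$. Since $G$ is finite and connected, $G^r$ is finite, connected and (by hypothesis) twin-free, so it contains no induced $A^+_{\infty}$ and Lemma~\ref{lemma:takeout_vertex} applies to $G^r$: for each $x\in I$ it produces a vertex $f(x)$ with $f(x)\in B_1^{G^r}(x)=B_r(x)$ and $G^r-f(x)$ twin-free. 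The twin-freeness of $G^r$ together with that of $G^r-f(x)$ means precisely that $V(G)\setminus\{f(x)\}$ is an identifying code of $G^r$, i.e.\ an $r$-identifying code of $G$. Because $d_G(x,f(x))\le r$ while $I$ is $(5r+1)$-independent, the map $f$ is injective and $f(I)$ is a $(3r+1)$-independent set with $|f(I)|=|I|$. Applying the $r$-analogue of Lemma~\ref{4-independent} to $f(I)$ then gives that $C=V(G)\setminus f(I)$ is an $r$-identifying code of $G$, hence $\gamma_r^{\text{\tiny {ID}} }(G)\le n-|I|\le n\bigl(1-\tfrac{\Delta-2}{\Delta(\Delta-1)^{5r}-2}\bigr)$; since $\tfrac{\Delta-2}{\Delta(\Delta-1)^{5r}-2}=\tfrac{1}{\Theta(\Delta^{5r})}$, this is the stated estimate.

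The step that needs the most care is the translation just used: that ``$G^r-f(x)$ twin-free'' really delivers ``$V(G)\setminus\{f(x)\}$ is an $r$-identifying code'' and not merely some weaker separation property. This rests on the elementary fact — already used implicitly in the proof of Theorem~\ref{BoundingByDelta^5} — that in a connected twin-free graph $H$ on at least three vertices, if $H-y$ is twin-free then $V(H)\setminus\{y\}$ is an identifying code of $H$: it dominates because $y$ is not isolated, a pair of vertices both different from $y$ is separated since $H-y$ is twin-free, and a pair involving $y$ is separated because otherwise $y$ would be a twin of the other vertex in $H$. Applying this with $H=G^r$ closes the gap. The remaining points — that $G^r$ inherits connectedness from $G$, and that one should assume $\Delta\ge 3$ and $n\ge 3$, the bound being vacuous or the hypotheses unrealisable otherwise — are routine bookkeeping.
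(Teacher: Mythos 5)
Your proposal is correct and is essentially the argument the paper intends: the theorem is stated there as the ``general form'' of Theorem~\ref{BoundingByDelta^5} obtained by working in $G^r$, and you have simply written out that lift in full (maximal $(5r+1)$-independent set, ball count of radius $5r$, Lemma~\ref{lemma:takeout_vertex} applied to $G^r$, and the $(3r+1)$-independent version of Lemma~\ref{4-independent}). The extra care you take in checking that twin-freeness of $G^r-f(x)$ yields an $r$-identifying code, and that $f$ contracts distances by at most $2r$, fills in exactly the details the paper leaves implicit.
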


\begin{theorem}
  Let $G$ be a connected infinite graph whose vertices are of finite
  degree such that $G^r$ is twin-free. Then there exists an infinite set of vertices
  $I\subseteq V(G)$, such that $V(G)\setminus I$ is an $r$-identifying code of $G$.
\end{theorem}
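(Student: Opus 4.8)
The plan is to imitate the proof of Theorem~\ref{InfiniteofInfinite} but carried out inside the power graph $G^r$, using the $r$-analogue of Lemma~\ref{4-independent} stated immediately above this theorem.

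First I would set $H:=G^r$ and record two elementary facts. Since every vertex of $G$ has finite degree, an induction on $r$ shows that $B^G_r(x)$ is finite for each $x$, so $H$ is locally finite; it is also connected (because $G$ is) and, by hypothesis, twin-free. In particular $H$ cannot contain $A^+_\infty$ as an induced subgraph, since every vertex of $A^+_\infty$ has infinite degree. Hence Lemma~\ref{lemma:takeout_vertex} applies to $H$ and yields, for each vertex $x$, a vertex $f(x)\in B^H_1(x)=B^G_r(x)$ with $H-f(x)$ twin-free. Because $H$ and $H-f(x)$ are both twin-free and $H$ is connected, $V(H)\setminus\{f(x)\}$ is an identifying code of $H$, i.e. $V(G)\setminus\{f(x)\}$ is an $r$-identifying code of $G$; this is the same elementary remark already used implicitly in the proof of Theorem~\ref{BoundingByDelta^5} (a domination failure at a vertex would force its only neighbour to be $f(x)$, but then the vertex dominates itself, and a separation failure for a pair $u,v$ would make $u,v$ twins in $H-f(x)$).

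Next I would build greedily an infinite $(5r+1)$-independent set $I=\{v_1,v_2,\dots\}$ of $G$: given $v_1,\dots,v_k$, the set $\bigcup_{i\le k}B^G_{5r}(v_i)$ is finite while $V(G)$ is infinite, so some vertex outside it can be chosen as $v_{k+1}$. Put $I':=f(I)$. Since $d_G(v_i,v_j)\ge 5r+1>2r$ for $i\ne j$, the balls $B^G_r(v_i)$ are pairwise disjoint, so $f$ is injective on $I$ and $I'$ is infinite; and by the triangle inequality $d_G(f(v_i),f(v_j))\ge d_G(v_i,v_j)-2r\ge 3r+1$, so $I'$ is $(3r+1)$-independent in $G$. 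As every $V(G)\setminus\{f(v_i)\}$ is an $r$-identifying code of $G$, the $r$-analogue of Lemma~\ref{4-independent} given just above (whose proof, like that of Lemma~\ref{4-independent}, uses no finiteness) shows that $V(G)\setminus I'$ is an $r$-identifying code of $G$, so $I'$ is the required infinite set.

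The one point that needs care is that Lemma~\ref{lemma:takeout_vertex} must be applied to $G^r$, not to $G$: deleting a vertex from $G$ and then taking the $r$-th power is not the same operation as taking the $r$-th power and then deleting the vertex, so twin-freeness of $G-y$ carries no information about $r$-identification. Apart from that, everything reduces to the locally-finite counting used to produce $I$ and the triangle inequality used to control $f(I)$.
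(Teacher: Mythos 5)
Your proof is correct and follows exactly the route the paper intends: the theorem is stated there without proof as the ``general form'' obtained by running the machinery of Lemma~\ref{lemma:takeout_vertex}, Lemma~\ref{4-independent} and Theorem~\ref{InfiniteofInfinite} inside $G^r$, and your write-up supplies precisely the details left implicit (local finiteness of $G^r$, exclusion of $A^+_\infty$, the greedy $(5r+1)$-independent set, and the $2r$ loss under $f$ giving $(3r+1)$-independence). Your closing caveat --- that the lemma must be applied to $G^r$ rather than to $G$, since $(G-y)^r$ need not equal $G^r-y$ --- is exactly the right point to flag.
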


\section{Remarks}

We conclude our paper by some remarks on related works.

{\bf Remark 1} 
The following two questions were posed in~\cite{S07}:
\begin{enumerate}
\item Do there exist $k$-regular graphs $G$ of order $n$ with
  $\gamma^{\text{\tiny {ID}}}(G)=n-1$ for $k<n-2$?
\item Do there exist graphs $G$ of odd order $n$ and maximum
  degree~$\Delta < n-1$ with $\gamma^{\text{\tiny {ID}}}(G)=n-1$?
\end{enumerate}

As a corollary of Theorem~\ref{thm:classification}, we can now answer
these questions in the negative. Indeed, for the first question, if $G$ is 
a $k$-regular ($k\geq 2$) graph of order $n$ with
$\gamma^{\text{\tiny {ID}}}(G)=n-1$ then $G$ 
is the join of $k$ disjoint copies of $A_1$.
For the second question,  noting that each graph in $\mathcal A$ has an even order,
we conclude that if a graph $G$ on an odd number, $n$, of vertices has 
$\gamma^{\text{\tiny {ID}}}(G)=n-1$, then 
$G\in \{K_{1,t}~|~t\geq 2 \} \cup (\mathcal A \join K_1)$ and, therefore $\Delta(G)=n-1$.

{\bf Remark 2}
Given a graph $G=(V,E)$ the \emph{1-ball membership graph} of $G$ is defined to be
the bipartite graph $G^*=(I\cup A, E^*)$ where $I=V(G), A=\{B_1(x) ~|~ x \in V(G)\}$
and $E^*=\{\{u,B_1(v)\}~|~ u\in B_1(v), u,v \in V(G)\}$.
It is not hard to see that the problem of finding identifying codes in $G$ is equivalent
to the one of finding discriminating codes in $G^*$. But since not every bipartite graph
is a 1-ball membership graph, the latter contains the former properly. It is a rephrasing
of Bondy's theorem~\cite{B72}, that every bipartite graph $(I\cup A, E)$ has a discriminating
code of size at most $|I|$. The class of bipartite graphs $(I\cup A, E)$ in which any 
discriminating code has size at least $|I|$ are classified in~\cite{CCCHL08}.  They
further asked for the classification of bipartite graphs in which every discriminating code
needs at least $|I|-1$ vertices of $A$.
In Theorem~\ref{thm:classification} we answered this question for those bipartite 
graphs that are isomorphic to a 
1-ball membership of a graph.

{\bf Acknowledgments}\\ We would like to acknowledge S.~Gravier, R.~Klasing and 
A.~Kosowski for helpful discussions on the topic of this paper. The example of Figure~\ref{powerPath} was found during a discussion with A.~Kosowski. 
We also would like to thank the referee for careful reading and for helping us with a better presentation.


\begin{thebibliography}{0}

\bibitem{B72} J.~A.~Bondy. Induced subsets. \emph{Journal of Combinatorial Theory, Series B}, 12(2):201--202, 1972.

\bibitem{B01} N.~Bertrand. Codes identifiants et codes localisateurs-dominateurs sur certains graphes, Master thesis, ENST, Paris, France, June 2001.

\bibitem{BCHL04} N.~Bertrand, I.~Charon, O.~Hudry and A.~Lobstein. Identifying and locating-dominating codes on chains and cycles, \emph{European Journal of Combinatorics}, 25(7), 969--987, 2004.

\bibitem{BCHL05} N.~Bertrand, I.~Charon, O.~Hudry and A.~Lobstein. 1-identifying codes on trees. \emph{Australasian Journal of Combinatorics}, 31:2135, 2005.

\bibitem{CCCHL08}I.~Charon, G.~Cohen, O.~Hudry, A.~Lobstein. Discriminating codes in bipartite graphs: bounds, extremal cardinalities, complexity. \emph{Advances in Mathematics of Communications}, 4(2):403--420, 2008.

\bibitem{CHHL07}I.~Charon, I.~Honkala, O.~Hudry, A.~Lobstein. Structural properties of twin-free graphs. \emph{Electronic Journal of Combinatorics}, 14(1), 2007.

\bibitem{CHL03} I.~Charon, O.~Hudry and A.~Lobstein. Minimizing the size of an identifying or locating-dominating code in a graph is {NP}-hard. \emph{Theoretical Computer Science}, 290(3):2109--2120, 2003.

\bibitem{CHL07} I.~Charon, O.~Hudry and A.~Lobstein. Extremal cardinalities for identifying and locating-dominating codes in graphs. \emph{Discrete Mathematics}, 307(3-5):356--366, 2007.

\bibitem{CHLZ01} G.~Cohen, I.~Honkala, A.~Lobstein and G.~Z\'{e}mor. On identifying codes. \emph{Vol. 56 of Proceedings of the DIMACS Workshop on Codes and Association Schemes '99}, pages 97-109, 2001.


\bibitem{F09} F.~Foucaud. Identifying codes in special graph classes. Master thesis, Universit\'{e} Bordeaux 1, France, June 2009, available online at http://www.labri.fr/perso/foucaud/Research/MastersThesis/.

\bibitem{Fprivate} F.~Foucaud, R.~Klasing, A.~Kosowski. Private communication, 2009.

\bibitem{GM07} S.~Gravier and J.~Moncel. On graphs having a $V\setminus\{x\}$ set as an identifying code. \emph{Discrete Mathematics}, 307(3-5):432--434, 2007.


\bibitem{HKSZ06} T.~W.~Haynes, D.~J.~Knisley, E.~Seier, and Y.~Zou. A quantitative analysis of secondary RNA structure using domination based parameters on trees. \emph{BMC Bioinformatics}, 7:108, 2006.

\bibitem{KCL98} M.~G.~Karpovsky, K.~Chakrabarty, and L.~B.~Levitin. On a new class of codes for identifying vertices in graphs. \emph{IEEE Transactions on Information Theory}, 44:599--611, 1998.


\bibitem{LTCS07} M.~Laifenfeld, A.~Trachtenberg, R.~Cohen, and D.~Starobinski. Joint monitoring and routing in wireless sensor networks using robust identifying codes. \emph{Proceedings of IEEE Broadnets 2007}, pages 197--206, September 2007.

\bibitem{MS94} R.~Motwani, M.~Sudan, Computing roots of graphs is hard. \emph{Discrete Applied Mathematics}, 54(1):81--88, 1994.


\bibitem{DRSTU03} S.~Ray, R.~Ungrangsi, F.~De Pellegrini, A.~Trachtenberg and D.~Starobinski. Robust location detection in emergency sensor networks. \emph{Proceedings of IEEE INFOCOM 2003}, pages 1044--1053, April 2003.

\bibitem{S07} R.~D.~Skaggs. Identifying vertices in graphs and digraphs. PhD thesis, University of South Africa, South Africa, February 2007, available online at http://hdl.handle.net/10500/2226.


\bibitem{S88} P.~J.~Slater. Dominating and reference sets in a graph. \emph{Journal of Mathematical and Physical Sciences}, 22(4):445--455, 1988.

\bibitem{RS84} P.~J.~Slater and D.~F.~Rall. On location-domination numbers for certain classes of graphs. \emph{Congressus Numerantium}, 45:97--106, 1984.




\end{thebibliography}
\end{document}